\documentclass[a4paper,UKenglish,cleveref,autoref]{lipics-v2019}


\newcommand{\minisec}[1]{%
  \par\addvspace{\smallskipamount}\noindent%
  \textbf{\sffamily #1}\enspace%
}


\usepackage{multicol}
\setlength{\multicolsep}{6.0pt plus 2.0pt minus 1.5pt}


\newcounter{InParEnum}
\newenvironment{inparenum}{%
  \noindent\ignorespaces%
  \setcounter{InParEnum}{0}%
}{%
  \ignorespacesafterend%
}
\newcommand{\inparitemformat}[1]{%
  \textsf{(\arabic{#1})}%
}
\newcommand{\inparitem}{%
  \stepcounter{InParEnum}%
  \inparitemformat{InParEnum}~%
}


\usepackage{amsmath}
\usepackage{amsthm}
\usepackage{amssymb}
\usepackage{mathabx}
\usepackage{mathtools}
\usepackage{xy}
\xyoption{all}


\usepackage{scalerel}

\newcommand{\bbsemicolon}{%
  \scalerel*{%
    \hbox{\usefont{U}{bbold}{m}{n} ;}%
  }{;}%
}

\AtBeginDocument{%
  \let\Dashv\relax
  \newcommand{\Dashv}{%
    \mathrel{\text{\reflectbox{\(\vDash\)}}}%
  }%
}

\newcommand{\vDashv}{%
  \mathrel{\text{\ooalign{\(\vDash\)\cr\(\Dashv\)\cr}}}%
}


\newtheorem{fact}[definition]{Fact}


\newcommand{\Sen}{\mathtt{Sen}}

\renewcommand{\models}{\vDash}
\newcommand{\nmodels}{\nvDash}
\newcommand{\bimodels}{\vDashv}

\newcommand{\comp}{\mathbin{\bbsemicolon}}
\newcommand{\red}{\mathord{\upharpoonright}}

\newcommand{\alt}{\enspace|\enspace}


\newcommand{\HDFOLS}{{\mathsf{HDFOLS}}}
\newcommand{\HDCLS}{{\mathsf{HDCLS}}}

\newcommand{\keyscript}[1]{\text{\normalfont\sffamily\itshape #1}}

\newcommand{\nominal}{\keyscript{n}}
\newcommand{\rigid}{\keyscript{r}}
\newcommand{\flexible}{\keyscript{f}}

\newcommand{\assign}{\leftarrow}

\newcommand{\ari}{\mathit{a\mkern -1mu r}}
\newcommand{\act}{\mathfrak{a}}


\newcommand{\bigand}{\bigwedge}
\renewcommand{\implies}{\Rightarrow}
\newcommand{\at}[1]{@_{#1}\,}
\newcommand{\nec}[1]{[#1]}
\newcommand{\store}[1]{{\downarrow}#1\,{\cdot}\,}
\newcommand{\forAll}[1]{\forall #1\,{\cdot}\,}
\newcommand{\Exists}[1]{\exists #1\,{\cdot}\,}
\newcommand{\lnext}[1]{(#1)\,}


\usepackage{mathpartir}


\usepackage{calc}
\usepackage{longtable}
\usepackage{booktabs}
\newcommand{\tabStretch}[1]{\renewcommand{\arraystretch}{#1}}
\newcommand{\tabColSep}[1]{\setlength{\tabcolsep}{#1}}
\usepackage{array,etoolbox}
\preto\tabular{\setcounter{tablerow}{0}}
\newcounter{tablerow}
\newcommand\rownumber{%
  \stepcounter{tablerow}\thetablerow\addtocounter{tablerow}{-1}%
}

\newenvironment{proofsteps}[1]{%
  \begingroup%
  \setlength\LTpre{\smallskipamount}%
  \setlength\LTpost{\smallskipamount}%
  \setcounter{tablerow}{0}%
  \tabStretch{1.2}\tabColSep{0em}\noindent%
  \longtable{%
    @{\makebox[\widthof{\footnotesize 9}][r]{\footnotesize\rownumber}}%
    @{\hskip 1.5em}>{\refstepcounter{tablerow}}p{#1}%
    @{\hskip 1.5em}>{\footnotesize\raggedright\arraybackslash}p{\linewidth-#1-4em}%
  }%
}{%
  \endlongtable%
  \addtocounter{table}{-1}%
  \endgroup%
}

\newenvironment{proofcases}{%
  \description%
}{\enddescription}



\bibliographystyle{plainurl}


\title{Horn Clauses in Hybrid-Dynamic First-Order Logic}


\author{Daniel G\u{a}in\u{a}}{%
  Institute of Mathematics for Industry, Kyushu University, Japan \and
  Department of Mathematics and Statistics, La Trobe University, Australia%
}{%
  daniel@imi.kyushu-u.ac.jp%
}{}{}

\author{Ionu\c{t} \c{T}u\c{t}u}{
  Simion Stoilow Institute of Mathematics of the Romanian Academy, Romania \and
  Department of Computer Science, Royal Holloway University of London, UK%
}{%
  ittutu@gmail.com%
}{}{}

\authorrunning{D.\ G\u{a}in\u{a} and I.\ \c{T}u\c{t}u}

\Copyright{Daniel G\u{a}in\u{a} and Ionu\c{t} \c{T}u\c{t}u}

\begin{CCSXML}
  <ccs2012>
  <concept>
  <concept_id>10003752.10003790.10003793</concept_id>
  <concept_desc>Theory of computation~Modal and temporal logics</concept_desc>
  <concept_significance>500</concept_significance>
  </concept>
  <concept>
  <concept_id>10003752.10003790.10002990</concept_id>
  <concept_desc>Theory of computation~Logic and verification</concept_desc>
  <concept_significance>300</concept_significance>
  </concept>
  </ccs2012>
\end{CCSXML}

\ccsdesc[500]{Theory of computation~Modal and temporal logics}
\ccsdesc[300]{Theory of computation~Logic and verification}

\keywords{Hybrid logic, Dynamic logic, Horn clause, Herbrand's theorem}



\nolinenumbers 

\hideLIPIcs 


\begin{document}

\maketitle

\begin{abstract}
  We propose a hybrid-dynamic first-order logic as a formal foundation for specifying and reasoning about reconfigurable systems.
  As the name suggests, the formalism we develop extends (many-sorted) first-order logic with features that are common to hybrid logics and to dynamic logics.
  This provides certain key advantages for dealing with reconfigurable systems, such as:
  (a)~a signature of nominals, including operation and relation symbols, that allows references to specific possible worlds / system configurations -- as in the case of hybrid logics;
  (b)~distinguished signatures of rigid and flexible symbols, where the rigid symbols are interpreted uniformly across possible worlds; this supports a rigid form of quantification, which ensures that variables have the same interpretation regardless of the possible world where they are evaluated;
  (c)~hybrid terms, which increase the expressive power of the logic in the context of rigid symbols; and
  (d)~modal operators over dynamic-logic actions, which are defined as regular expressions over binary nominal relations.
  We then study Horn clauses in this hybrid-dynamic logic, and develop a series of results that lead to an initial-semantics theorem for arbitrary sets of clauses.
  This shows that a significant fragment of hybrid-dynamic first-order logic has good computational properties, and can serve as a basis for defining executable languages for reconfigurable systems.
  Lastly, we set out the foundations of logic programming in this fragment by proving a hybrid-dynamic variant of Herbrand's theorem, which reduces the semantic entailment of a logic-programming query by a program to the search of a suitable answer substitution.
\end{abstract}


\section{Introduction}

The dynamic-reconfiguration paradigm is a most promising approach in the development of highly complex and integrated systems of interacting `components', which now often evolve dynamically, at run time, in response to internal or external stimuli.
More than ever, we are witnessing a continuous increase in the number of applications with reconfigurable features, many of which have aspects that are safety- or security-critical.
This calls for suitable formal-specification and verification technologies, and there is already a significant body of research on this topic; hybrid(ized) logics~\cite{Brauner11,MartinsMDB11}, first-order dynamic logic~\cite{HarelKT01}, and modal \(\mu\)-calculus~\cite{GrooteM2014} are three prominent examples, among many others.

In this paper, we focus on those reconfigurable systems
whose states or configurations can be presented explicitly, based on some kind of context-independent data types,
and for which we distinguish the computations performed at the local/configuration level from the dynamic evolution of the configurations.
This suggests a two-layered approach to the design and analysis of reconfigurable systems, involving
\emph{a local view}, which amounts to describing the structural properties of configurations, and
\emph{a global view}, which corresponds to a language for specifying and reasoning about the way system configurations evolve.

For that purpose, we propose a new modal logical system for the reconfiguration paradigm that is obtained by enriching first-order logic (regarded as a base logic, or parameter for the whole construction) with both hybrid and dynamic features.
More specifically, we model reconfigurable systems as Kripke structures (or transition systems), where:
\begin{itemize}
  
\item from a local perspective, we consider a dedicated first-order signature for configurations, and hence capture configurations as first-order structures for that signature; and

\item from a global perspective, we consider a second first-order signature for the possible worlds of the Kripke structure;
  the terms over that signature are \emph{nominals} used to identify configurations, and the binary nominal relations are regarded as \emph{modalities}, which capture the transitions, or accessibility relations, between configurations.
  
\end{itemize}
Concerning the syntax of the logic proposed, sentences are build from equations and relational atoms over the two first-order signatures mentioned above (one pertaining to data, and the other to possible worlds) by using Boolean connectives, quantifiers, standard hybrid-logic operators such as \emph{retrieve} and \emph{store}, and dynamic-logic operators such as \emph{necessity} over structured actions, which are defined as regular expressions over modalities.

The construction is reminiscent of the hybridization of institutions from~\cite{MartinsMDB11,DiaconescuM16} and of the hybrid-dynamic logics presented in~\cite{BohrerP18,HennickerMK19}, but it departs fundamentally from any of those studies due to the fact that the possible worlds of the Kripke structures that we consider here have an algebraic structure.
This special feature of the logic that we put forward is extremely important for dealing with reconfigurable systems whose states are obtained from initial configurations by applying constructor operations; see, e.g.~\cite{GainaTR18}.
In this context, we advance a general notion of Horn clause, which allows the use of implications, universal quantifiers, as well as the hybrid- and dynamic-logic operators listed above.
We study congruences over Kripke structures, and show how these can be used in quotienting initial Kripke models with respect to Horn clauses.
Initial semantics has received a lot of attention in the formal-specification literature, where it has traditionally been linked to formalisms that support execution by means of rewriting.
Here, we explore the key role of initiality in logic programming -- where the initial models are typically referred to as least Herbrand models -- and we prove a hybrid-dynamic version of Herbrand's theorem.
This a fundamental result, because it establishes a connection between the semantic entailment of a logic-programming query and the existence of an answer substitution for that query -- which is a syntactic construct, and can subsequently be obtained by proof-theoretic means.

The paper is structured as follows:
Section~\ref{section:HDFOLS} is devoted to the definition of the new logic, which we call hybrid-dynamic first-order logic with user-defined sharing (this last attribute is meant to indicate the fact that users have control over the symbols that should be interpreted the same across the possible worlds of a Kripke structure);
then, in Section~\ref{section:initiality} we introduce Horn clauses and prove the main initiality results of the paper (for atomic sentences and for Horn clauses), while in Section~\ref{section:Herbrand-theorem} we focus on queries and Herbrand's theorem;
lastly, in Sections~\ref{section:related-work} and~\ref{section:conclusions} we discuss related work and potential future research.


\section{Hybrid-Dynamic First-Order Logic with user-defined Sharing}
\label{section:HDFOLS}

The hybrid-dynamic first-order logic (\(\HDFOLS\)) that we propose in this paper is rooted in the same ideas that underlie hybrid first-order logic~\cite{Brauner11} and hybrid first-order logic with rigid symbols~\cite{DiaconescuM16,Diaconescu16}. 
Its signatures contain distinguished subsets of so-called \emph{rigid symbols}, which may be sorts, operation or relation symbols, and are meant to have the same interpretation across the possible worlds of a given Kripke structure.
This provides support for the most common form of quantification in modal logic, where the semantics of variables is rigid.
We develop these ideas further by considering first-order structures of possible worlds.

We present \(\HDFOLS\) from an institutional perspective~\cite{GoguenB92}, meaning that we focus on signatures and signature morphisms (though, for the purpose of this paper, inclusions would suffice), Kripke structures and homomorphisms, sentences, and the (local) satisfaction relation and condition that relate the syntax and the semantics of \(\HDFOLS\).
However, other that the notations used, the text requires no prior knowledge of institution theory, and should be accessible to readers with a general background in modal and first-order model theory.

\minisec{Signatures}
The \emph{signatures} of \(\HDFOLS\) are tuples of the form \(\Delta = (\Sigma^{\nominal}, \Sigma^{\rigid} \subseteq \Sigma)\), where:
\begin{enumerate}
  
\item \(\Sigma^{\nominal} = (S^{\nominal}, F^{\nominal}, P^{\nominal})\) is a first-order signature of \emph{nominals} such that \(S^{\nominal} = \{\star\}\) is a singleton,

\item \(\Sigma^{\rigid} = (S^{\rigid}, F^{\rigid}, P^{\rigid})\) is a first-order signature of so-called \emph{rigid symbols}, and

\item \(\Sigma = (S, F, P)\) is a first-order signature of both \emph{rigid} and \emph{flexible} symbols.
  
\end{enumerate}
We denote by \(F^{\flexible}\) and \(P^{\flexible}\) the sub-families of \(F\) and \(P\), respectively, that consist of \emph{flexible symbols} (obtained by removing the rigid symbols).
Throughout the paper, we generally denote by \(\Delta\) and \(\Delta'\) signatures of the form \((\Sigma^{\nominal}, \Sigma^{\rigid} \subseteq \Sigma)\) and \((\Sigma'^{\nominal}, \Sigma'^{\rigid} \subseteq \Sigma')\), respectively.

\minisec{Signature morphisms}
A \emph{signature morphism} \(\varphi \colon \Delta \to \Delta'\) consists of a pair of first-order signature morphisms
\(\varphi^{\nominal} \colon \Sigma^{\nominal} \to \Sigma'^{\nominal}\) and \(\varphi \colon \Sigma \to \Sigma'\) such that \(\varphi(\Sigma^{\rigid}) \subseteq \Sigma'^{\rigid}\).

\minisec{Kripke structures}
The \emph{models} of a signature \(\Delta\) are pairs \((W, M)\), where:
\begin{enumerate}
  
\item \(W\) is a \(\Sigma^{\nominal}\)-model, for which we denote by \(|W|\) the carrier set of the only sort \(\star\) in \(\Sigma^{\nominal}\), 

\item \(M = \{M_{w}\}_{w \in |W|}\) is a family of \(\Sigma\)-models, indexed by \emph{possible worlds} \(w \in |W|\), such that the rigid symbols have the same interpretation across the worlds;
  that is, \(M_{w_{1}, \varsigma} = M_{w_{2}, \varsigma}\) for all possible worlds \(w_{1}, w_{2} \in |W|\) and all symbols \(\varsigma\) in \(\Sigma^{\rigid}\).
  
\end{enumerate}
\vspace{-\smallskipamount}

\minisec{Kripke homomorphisms}
A \emph{morphism} \(h \colon (W, M) \to (W', M')\) is also a pair, given by
a \(\Sigma^{\nominal}\)-homomorphism \(h \colon W \to W'\) and,
for each \(w \in |W|\), a \(\Sigma\)-homomorphism \(h_{w} \colon M_{w} \to M'_{h(w)}\),
such that \(h_{w_{1}, s} = h_{w_{2}, s}\) for all possible worlds \(w_{1}, w_{2} \in |W|\) and all rigid sorts \(s \in S^{\rigid}\).

\minisec{Actions}
As in dynamic logic, \(\HDFOLS\) supports structured actions obtained from atoms using sequential composition, union, and iteration.
The set \(A^{\nominal}\) of \emph{actions} over \(\Sigma^{\nominal}\) is defined inductively, according to the following grammar:
\(\act \Coloneqq \lambda \in P^{\nominal}_{\star \star} \mid \act \comp \act \mid \act + \act \mid \act^{*}\).

Actions are interpreted in Kripke structures as \emph{accessibility relations} between possible worlds.
This is done by extending the interpretation of binary modalities (symbols in \(P^{\nominal}_{\star \star}\)) as follows:
\(W_{\act_{1} \comp \act_{2}} = W_{\act_{1}} \comp W_{\act_{2}}\) (diagrammatic composition of relations), 
\(W_{\act_{1} + \act_{2}} = W_{\act_{1}} \cup W_{\act_{2}}\) (union of relations), and
\(W_{\act^{*}} = (W_\act)^{*}\) (reflexive and transitive closure of a relation).

\minisec{Hybrid terms}
For every \(\Sigma^{\nominal}\)-model \(W\),
the family \(T^{W} = \{ T^{W}_{w} \}_{w \in |W|}\) of hybrid terms over \(W\) is defined inductively as the least family of (sorted) sets satisfying the following rules:\footnote{For brevity, we often use \(\tau\) or \(t\) to denote not only terms, but also tuples of terms.}

\medskip
\noindent
\parbox{12.25em}{%
  \textsf{(1)}~\inferrule{w \in |W| \and \tau \in T^{W}_{w, \ari}}{\sigma(\tau) \in T^{W}_{w, s}}\\[1ex]
  \phantom{\textsf{(1)}} where \(\sigma \in F^{\rigid}_{\ari \to s}\)
}
\hfill
\parbox{12.25em}{%
  \textsf{(2)}~\inferrule{w \in |W| \and \tau \in T^{W}_{w, \ari}}{\sigma(w, \tau) \in T^{W}_{w, s}}\\[1ex]
  \phantom{\textsf{(2)}} where \(\sigma \in F^{\flexible}_{\ari \to s}\)
}
\hfill
\parbox{12.25em}{%
  \textsf{(3)}~\inferrule{w_{0}, w \in |W| \and \tau \in T^{W}_{w_{0}, s}}{\tau \in T^{W}_{w, s}}\\[1ex]
  \phantom{\textsf{(3)}} where \(s \in S^{\rigid}\)
}
\medskip

\noindent Notice that flexible operation symbols receive a possible world \(w \in |W|\) as an extra argument while the rigid operation symbols keep their initial arity.
Also, hybrid terms of rigid sorts are shared across the worlds.
These are important considerations for Proposition~\ref{proposition:tm-reachability} below.

Given a world \(w \in |W|\), the \(S\)-sorted set \(T^{\Delta}_{w}\) can be regarded as a \(\Sigma\)-model by interpreting
every rigid operation symbol \(\sigma \colon \ari \to s\) as the function that maps hybrid terms \(\tau \in T^{W}_{w, \ari}\) to \(\sigma(\tau)\),
every flexible operation symbol \(\sigma \colon \ari \to s\) as the function that maps hybrid terms \(\tau \in T^{W}_{w, \ari}\) to \(\sigma(w, \tau)\),
and every relation symbol (rigid or flexible) as the empty set.

\begin{lemma} [Hybrid-term model and its freeness]
  \label{lemma:htm-freeness}
  For every\/ \(\Sigma^{\nominal}\)-model\/ \(W\), \((W, T^{W})\) is a \(\Delta\)-model.
  Moreover, for any \(\Delta\)-model \((W', M')\) and first-order\/ \(\Sigma^{\nominal}\)-homomorphism \(f \colon W \to W'\) there exists a unique \(\Delta\)-homomorphism \(h \colon (W, T^{W}) \to (W', M')\) that agrees with \(f\) on \(W\).
\end{lemma}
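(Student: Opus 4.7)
The plan is to split the proof into two parts. First I would verify that $(W, T^W)$ is a $\Delta$-model: each $T^W_w$ is a $\Sigma$-model by construction (operations are interpreted exactly as specified in the text, and every relation symbol is interpreted as the empty set), so only the rigidity conditions need checking. Rule~(3) makes $T^W_{w,s}$ independent of $w$ whenever $s \in S^{\rigid}$; the interpretation $\tau \mapsto \sigma(\tau)$ of any rigid operation symbol does not refer to $w$ (and, because rigid operation symbols take only rigid-sorted arguments, it is applied to carriers that are already shared across worlds), so rigid operations have the same interpretation in every $T^W_w$; and empty relations are trivially world-independent.

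For the freeness part I would take $f$ as the $\Sigma^{\nominal}$-component of $h$ and, for each $w \in |W|$ and $s \in S$, define $h_{w,s} \colon T^W_{w,s} \to M'_{f(w),s}$ by recursion on the term: for a rigid-headed term $\sigma(\tau_0)$ with $\sigma \in F^{\rigid}_{\ari \to s}$, put $h_{w,s}(\sigma(\tau_0)) = \sigma_{M'_{f(w)}}(h_{w,\ari}(\tau_0))$; for a flexible-headed term $\sigma(w_0,\tau_0)$ with $\sigma \in F^{\flexible}_{\ari \to s}$, put $h_{w,s}(\sigma(w_0,\tau_0)) = \sigma_{M'_{f(w_0)}}(h_{w_0,\ari}(\tau_0))$, reading the origin world $w_0$ directly off the term. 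The $\Sigma$-homomorphism condition on each $h_w$ is then automatic, since these two equations literally encode the commutation of $h_w$ with the interpretations of $\sigma$ in $T^W_w$ and in $M'_{f(w)}$, while relation symbols impose no condition because they are empty on the source.

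The main delicacy is the well-definedness of $h_{w,s}$ for rigid sorts, where rule~(3) can place the same syntactic term $\tau$ into several $T^W_{w,s}$ at once and where, moreover, the Kripke-homomorphism requirement asks for $h_{w_1,s} = h_{w_2,s}$. Both issues collapse to the same observation about the recursion: for a rigid-headed $\sigma(\tau_0)$ the expression $\sigma_{M'_{f(w)}}(h_{w,\ari}(\tau_0))$ does not in fact depend on $w$, because $\sigma$ is rigid in $M'$ and the induction hypothesis gives $h_{w,\ari}(\tau_0)$ independently of $w$ on the rigid-sorted arguments; and for a flexible-headed $\sigma(w_0,\tau_0)$ the defining expression only mentions the embedded $w_0$. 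Uniqueness is then the usual freeness argument: any $\Delta$-homomorphism extending $f$ must commute with every operation symbol, so structural induction forces it to agree with $h$ on all hybrid terms.
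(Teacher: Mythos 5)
Your argument is correct and takes essentially the same route as the paper's own proof: both define \(h\) by structural induction on hybrid terms, with the \(\Sigma\)-homomorphism conditions forcing the clauses for rigid- and flexible-headed terms, and with the sharing of rigid-sorted terms across worlds handled by showing the recursive value is independent of the evaluating world (the paper records this as an explicit third clause \(h_{w,s}(\tau)=h_{w_{0},s}(\tau)\) of the recursion rather than as a well-definedness check). The additional detail you supply for the model conditions on \((W, T^{W})\) and for uniqueness merely spells out what the paper dismisses as straightforward.
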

\begin{proof}
  The fact that \((W, T^{W})\) is a \(\Delta\)-model is straightforward.
  For the `freeness' part, we define \(\{ h_{w} \colon T^{W}_{w} \to M'_{f(w)} \}_{w \in |W|}\) by structural induction on hybrid terms:
  \begin{enumerate}
  \item \(h_{w}(\sigma(\tau)) = M'_{f(w),\sigma}(h_{w}(\tau))\) for all \(w \in |W|\), \(\tau \in T^{W}_{w,\ari}\), and \(\sigma \in F^{\rigid}_{\ari \to s}\);

  \item \(h_{w}(\sigma(w, \tau)) = M'_{f(w),\sigma}(h_{w}(\tau))\) for all \(w \in |W|\), \(\tau \in T^{W}_{w,\ari}\), and \(\sigma \in F^{\flexible}_{\ari \to s}\);

  \item \(h_{w, s}(\tau) = h_{w_{0}, s}(\tau)\) for all \(w_{0}, w \in |W|\), \(\tau \in T^{W}_{w_{0}, s}\), and \(s \in S^{\rigid}\).

  \end{enumerate}
  It is again straightforward to check that \(h = (f, \{ h_{w} \}_{w \in |W|})\) is a Kripke homomorphism \((W, T^{W}) \to (W', M')\), and (also by induction) that it is unique with this property.
\end{proof}

\minisec{Standard term model}
When \(W\) is the first-order term model \(T_{\Sigma^{\nominal}}\), by Lemma~\ref{lemma:htm-freeness} we obtain the standard hybrid-term model over \(\Delta\), which we denote by \((T_{\Sigma^{\nominal}}, \{T^{\Delta}_{k}\}_{k \in T_{\Sigma^{\nominal}}})\).

The initiality of the standard term model provides a straightforward interpretation of hybrid terms in \(\Delta\)-models \((W, M)\): for every hybrid term \(t \in T^{\Delta}_{k}\), we denote by \((W, M)_{t}\) or \(M_{h(k), t}\) the image of \(t\) under \(h_{k}\), where \(h\) is the unique homomorphism \((T_{\Sigma^{\nominal}}, T^{\Delta}) \to (W, M)\).

\begin{proposition} [Reachability of the hybrid-term model]
  \label{proposition:tm-reachability}
  If\/ \(W\) is a \emph{reachable} \(\Sigma^{\nominal}\)-model, meaning that the first-order homomorphism \(T_{\Sigma^{\nominal}} \to W\) is surjective, then \((W, T^{W})\) is \emph{reachable} too; i.e.\ the unique homomorphism \(h \colon (T_{\Sigma^{\nominal}}, \{ T^{\Delta}_{k} \}_{k \in T_{\Sigma^{\nominal}}}) \to (W, \{ T^{W}_{w} \}_{w \in W})\) is surjective.
\end{proposition}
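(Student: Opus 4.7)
The plan is to strengthen the conclusion to the following claim, which is then proved by rule-induction on the derivation of a hybrid term: for every possible world \(w \in |W|\), every hybrid term \(t \in T^{W}_{w}\), and every nominal \(k \in T_{\Sigma^{\nominal}}\) satisfying \(h(k) = w\), there exists a term \(t' \in T^{\Delta}_{k}\) with \(h_{k}(t') = t\). Since the reachability of \(W\) guarantees that some such \(k\) exists for every \(w\), this entails the surjectivity of the Kripke homomorphism in the required sense. Allowing an arbitrary \(k\) in the induction hypothesis is the essential move, because in the inductive step for a rigid operation symbol one must recover preimages of all subterms \emph{simultaneously at the same} \(k\) in order to reassemble them into a term of \(T^{\Delta}_{k}\).

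I would then treat the three clauses from the definition of hybrid terms. For clause (1), \(t = \sigma(\tau)\) with \(\sigma \in F^{\rigid}\), the induction hypothesis applied component-wise at the fixed \(k\) with \(h(k) = w\) produces preimages \(\tau'_{i} \in T^{\Delta}_{k, \ari_{i}}\), and setting \(t' = \sigma(\tau')\) works because \(h_{k}\) is a \(\Sigma\)-homomorphism. For clause (2), \(t = \sigma(w, \tau)\) with \(\sigma \in F^{\flexible}\), the induction hypothesis yields \(\tau' \in T^{\Delta}_{k, \ari}\), and the choice \(t' = \sigma(k, \tau')\) satisfies \(h_{k}(t') = \sigma(h(k), h_{k}(\tau')) = \sigma(w, \tau)\) precisely because \(h(k) = w\). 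For clause (3), where \(t \in T^{W}_{w_{0}, s}\) is transported to \(T^{W}_{w, s}\) along a rigid sort \(s\), one first picks some \(k_{0}\) with \(h(k_{0}) = w_{0}\) (available by the reachability of \(W\)), applies the induction hypothesis to obtain \(t'_{0} \in T^{\Delta}_{k_{0}, s}\) with \(h_{k_{0}}(t'_{0}) = t\), and then observes that clause (3) applied to \(T^{\Delta}\) places \(t'_{0}\) in \(T^{\Delta}_{k, s}\) as well, while \(h_{k, s}(t'_{0}) = h_{k_{0}, s}(t'_{0}) = t\) follows from the rigid-sort coincidence \(h_{k_{1}, s} = h_{k_{2}, s}\) required of Kripke homomorphisms.

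The main obstacle is the coordination of preimages across subterms in clause (1): a naive induction that merely asserts existence of \emph{some} preimage would place the \(\tau'_{i}\)'s in different \(T^{\Delta}_{k_{i}}\)'s, with no apparent way to glue them into a single term of \(T^{\Delta}_{k}\). Strengthening the induction hypothesis to be uniform over \(k\) solves this at one stroke, and the two sharing conditions built into the construction—identical interpretations of rigid sorts in \(T^{\Delta}\) and the coincidence \(h_{k_{1}, s} = h_{k_{2}, s}\) on rigid sorts for Kripke homomorphisms—are exactly what makes the rigid-sort case (3) compatible with this strengthened formulation.
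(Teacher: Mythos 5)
Your proposal is correct and follows essentially the same route as the paper: an induction on the structure of hybrid terms over \(W\), with the statement quantified over all nominal terms \(k\) mapping to the relevant world (the paper phrases this as ``for every nominal term \(k\), the homomorphism \(h_{k} \colon T^{\Delta}_{k} \to T^{W}_{h(k)}\) is surjective''), and with reachability of \(W\) invoked exactly in the rigid-sort case to supply a preimage \(k_{0}\) of \(w_{0}\). Your explicit justification of why the induction hypothesis must be uniform in \(k\) is a useful clarification of a point the paper leaves implicit, but the argument itself is the same.
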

\begin{proof}
  By hypothesis, \(h \colon T_{\Sigma^{\nominal}} \to W\) is surjective.
  Therefore, all we need to prove is that, for every nominal term \(k\), the \(\Sigma\)-homomorphism \(h_{k} \colon T^{\Delta}_{k} \to T^{W}_{h(k)}\) is surjective.
  We proceed by induction on the structure of the hybrid terms over \(W\):
  \begin{enumerate}
  \item Assume that \(\tau \in T^{W}_{h(k),\ari}\) and \(\sigma \in F^{\rigid}_{\ari \to s}\).
    By the induction hypothesis, there exists a hybrid term \(t \in T^{\Delta}_{k,\ari}\) such that \(h_{k}(t) = \tau\).
    Therefore, \(h_{k}(\sigma(t)) = \sigma(h_{k}(t)) = \sigma(\tau)\).

  \item Assume that \(\tau \in T^{W}_{h(k),\ari}\) and \(\sigma \in F^{\flexible}_{\ari \to s}\).
    By the induction hypothesis, there exists \(t \in T^{\Delta}_{k,\ari}\) such that \(h_{k}(t) = \tau\).
    Therefore, \(h_{k}(\sigma(k, t)) = \sigma(h(k), h_{k}(t)) = \sigma(h(k), \tau)\).

  \item Assume that \(\tau \in T^{W}_{w_{0}, s}\) and \(s \in S^{\rigid}\).
    Since \(W\) is reachable, \(w_{0} = h(k_{0})\) for some nominal term \(k_{0}\).
    By the induction hypothesis, there exists \(t_{0} \in T^{\Delta}_{k_{0}, s}\) such that \(h_{k_{0}}(t) = \tau\).
    Finally, given that \(s\) is rigid, we have \(t \in T^{\Delta}_{k, s}\) and \(h_{k}(t) = h_{k_{0}}(t) = \tau\).
    \qedhere
  \end{enumerate}
\end{proof}

\minisec{Sentences}
The \emph{atomic sentences} \(\rho\) defined over a \(\HDFOLS\)-signature \(\Delta\) are given by:
\[
  \rho \Coloneqq
  k_{1} = k_{2} \alt
  \lambda(k') \alt
  t_{1} =_{k, s} t_{2} \alt
  \varpi(t) \alt
  \pi(k, t)
\]
where \(k, k_{i} \in T_{\Sigma^{\nominal}}\) are nominal terms,
\(k'\) is a tuple of terms corresponding to the arity of \(\lambda \in P^{\nominal}\),
\(t_{i} \in T^{\Delta}_{k, s}\) and \(t \in T^{\Delta}_{k,\ari}\) are (tuples of) hybrid terms,\footnote{Note that, if the arity \(\ari\) is rigid, then the sets \(\{T^{\Delta}_{k,\ari}\}_{k \in T_{\Sigma^{\nominal}}}\) coincide.}
\(\varpi \in P^{\rigid}_{\ari}\),
and \(\pi \in P^{\flexible}_{\ari}\).
We refer to these sentences, in order, as \emph{nominal equations}, \emph{nominal relations}, \emph{hybrid equations}, \emph{rigid hybrid relations}, and \emph{non-rigid/flexible hybrid relations}, respectively.
When there is no danger of confusion, we may drop one or both subscripts \(k, s\) from the notation \(t_{1} =_{k, s} t_{2}\).

\emph{Full sentences}\label{sentence-building-operators} over \(\Delta\) are built from atoms according to the following grammar:
\[
  \gamma \Coloneqq
  \rho \alt
  \act(k_{1}, k_{2}) \alt
  \at{k} \gamma \alt
  \lnot \gamma \alt
  \textstyle\bigand \Gamma \alt
  \store{z} \gamma' \alt
  \forAll{X} \gamma'' \alt
  \nec{\act} \gamma \alt
  \lnext{\sigma} \gamma
\]
where \(k, k_{i} \in T_{\Sigma^{\nominal}}\) are nominal terms,
\(\act \in A^{\nominal}\) is an action,
\(\Gamma\) is a finite set of sentences,
\(z\) is a nominal variable,
\(\gamma'\) is a sentence over the signature \(\Delta[z]\) obtained by adding \(z\) as a new constant to \(F^{\nominal}\),
\(X\) is a block of nominal and/or rigid variables,
\(\gamma''\) is a a sentence over the signature \(\Delta[X]\) obtained by adding the elements of \(X\) as a new constants to \(F^{\nominal}\) and \(F^{\rigid}\), and
\(\sigma\) is a unary operation symbol in \(F^{\nominal}\).
Other than the first two kinds of sentences (\emph{atoms} and \emph{action relations}), we refer to the sentence-building operators, in order, as \emph{retrieve}, \emph{negation}, \emph{conjunction}, \emph{store}, \emph{universal quantification}, \emph{necessity}, and \emph{next}, respectively.

We denote by \(\Sen^{\HDFOLS}(\Delta)\) the set of all \(\HDFOLS\)-sentences over \(\Delta\).

\minisec{The local satisfaction relation}
Given a Kripke structure \((W, M)\) for \(\Delta\) and a possible world \(w \in |W|\), we define the \emph{satisfaction of \(\Delta\)-sentences at \(w\)} by structural induction as follows:
\begin{enumerate}

\item \emph{For atomic sentences}:
  \begin{itemize}

  \item \((W, M) \models^{w} k_{1} = k_{2}\) iff \(W_{k_{1}} = W_{k_{2}}\) for all nominal equations \(k_{1} = k_{2}\);

  \item \((W, M) \models^{w} \lambda(k)\) iff \(W_{k} \in W_{\lambda}\) for all nominal relations \(\lambda(k)\);

  \item \((W, M) \models^{w} t_{1} =_{k} t_{2}\) iff \(M_{w', t_{1}} = M_{w', t_{2}}\), where \(w' = W_{k}\), for all equations \(t_{1} =_{k} t_{2}\);

  \item \((W, M) \models^{w} \varpi(t)\) iff \((W, M)_t \in M_{w,\varpi}\) for all rigid relations \(\varpi(t)\);

  \item \((W, M) \models^{w} \pi(k, t)\) iff \((W, M)_t \in M_{w',\pi}\), where \(w' = W_{k}\), for flexible relations \(\pi(k, t)\).

  \end{itemize}

\item \emph{For full sentences}:
  \begin{itemize}

  \item \((W, M) \models^{w} \act(k_{1}, k_{2})\) iff \((W_{k_{1}}, W_{k_{2}}) \in W_{\act}\) for all action relations \(\act(k_{1}, k_{2})\);

  \item \((W, M) \models^{w} \at{k} \gamma\) iff \((W, M) \models^{w'} \gamma\), where \(w' = W_{k}\);

  \item \((W, M) \models^{w} \neg \gamma\) iff \((W, M) \nmodels^{w} \gamma\);

  \item \((W, M) \models^{w} \textstyle\bigand \Gamma\) iff \((W, M) \models^{w} \gamma\) for all \(\gamma \in \Gamma\); 

  \item \((W, M) \models^{w} \store{z}{\gamma}\) iff \((W, M)^{z \assign w} \models^{w} \gamma\), \\
    where \(W^{z \assign w}\) is the unique \(\Delta[z]\)-expansion\footnote{In general, by a \(\Delta[X]\)-expansion of \((W, M)\) we understand a \(\Delta[X]\)-model \((W', M')\) that interprets all symbols in \(\Delta\) in the same way as \((W, M)\). In particular, \((W', M')\) has the same carrier sets as \((W, M)\).} of \(W\) that interprets the variable \(z\) as \(w\);

  \item \((W, M) \models^{w} \forAll{X}{\gamma}\) iff \((W', M') \models^{w} \gamma\) for all \(\Delta[X]\)-expansions \((W', M')\) of \((W, M)\);

  \item \((W, M) \models^{w} \nec{\act} \gamma\) iff \((W, M) \models^{w'}\gamma\) for all worlds \(w' \in |W|\) such that \((w, w') \in W_\act\);

  \item \((W, M) \models^{w} \lnext{\sigma} \gamma\) iff \((W, M) \models^{w'}\gamma\), where \(w' = W_{\sigma}(w)\).

  \end{itemize}

\end{enumerate}

\begin{fact}
  \label{fact:sat-atoms-actrel}
  The next properties follow easily from the definition of the satisfaction relation.
  \begin{enumerate}
    
  \item The satisfaction of atoms and of action relations \(\rho\) does not depend on the possible worlds.
    More specifically, \((W, M) \models^{w} \rho\) iff \((W, M) \models^{w'} \rho\) for all possible worlds \(w, w' \in |W|\).

  \item The satisfaction of atoms and of action relations \(\rho\) is preserved by homomorphisms.
    That is, if \((W, M) \models \rho\) and there is a homomorphism \((W, M) \to (W', M')\), then \((W', M') \models \rho\).
    
  \end{enumerate}
\end{fact}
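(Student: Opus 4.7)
The plan is to prove both parts by straightforward case analysis on the syntactic form of \(\rho\), reading the two claims almost directly off the corresponding satisfaction clauses. The only genuinely inductive step will be the preservation of structured action relations under homomorphisms in part~(2).

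For part~(1), I would go through the five atom forms and the action-relation form and observe that, in each case, the right-hand side of the ``iff'' in the satisfaction clause never mentions the ambient world \(w\). The nominal-equation, nominal-relation, and action-relation clauses visibly mention only \(W\)-data. The hybrid-equation and flexible-relation clauses explicitly substitute \(w\) by \(w' = W_{k}\), which is fixed by the atom itself. The rigid-relation clause for \(\varpi(t)\) appears, at first sight, to depend on \(w\) through \(M_{w, \varpi}\), but rigidity of \(\varpi\) forces \(M_{w, \varpi}\) to be the same across all worlds; similarly, because the arity of a rigid relation symbol consists of rigid sorts, clause~\textsf{(3)} of the hybrid-term definition ensures that \((W, M)_{t}\) is world-independent as well.

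For part~(2), fix a homomorphism \(h \colon (W, M) \to (W', M')\). I would first record the small commutation property
\(
  h_{W_{k}}\bigl((W, M)_{t}\bigr) = (W', M')_{t}
\)
for every nominal term \(k\) and every hybrid term \(t \in T^{\Delta}_{k, s}\); this is essentially the uniqueness half of Lemma~\ref{lemma:htm-freeness}, applied to the composite of the standard interpretation map \((T_{\Sigma^{\nominal}}, T^{\Delta}) \to (W, T^{W})\) with \(h\). Given this, preservation of the five atom forms is immediate: the nominal-equation and nominal-relation cases use that \(h \colon W \to W'\) is a first-order \(\Sigma^{\nominal}\)-homomorphism, while the three hybrid atoms use that each \(h_{w'} \colon M_{w'} \to M'_{h(w')}\) is a \(\Sigma\)-homomorphism and therefore respects equalities and membership in the interpretations of relation symbols; rigidity again takes care of matching \(M_{w', \varpi}\) with \(M'_{h(w'), \varpi}\) in the rigid-relation case.

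The only step that needs more than unfolding the definitions is the preservation of action relations \(\act(k_{1}, k_{2})\), where \(\act\) can be an arbitrary regular expression over \(P^{\nominal}_{\star \star}\). I would handle this by induction on the structure of \(\act\): the base case \(\act = \lambda\) holds because \(h\) preserves nominal relation symbols; the cases \(\act_{1} \comp \act_{2}\) and \(\act_{1} + \act_{2}\) follow by unfolding \(W_{\act_{1} \comp \act_{2}} = W_{\act_{1}} \comp W_{\act_{2}}\) and \(W_{\act_{1} + \act_{2}} = W_{\act_{1}} \cup W_{\act_{2}}\) and applying the induction hypothesis; and the Kleene-star case \(\act^{*}\) reduces to a further inner induction on the length of the chain of \(W_{\act}\)-steps witnessing \((W_{k_{1}}, W_{k_{2}}) \in (W_{\act})^{*}\). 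This action induction is the main --- and essentially only --- non-mechanical ingredient of the proof.
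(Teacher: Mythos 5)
The paper offers no proof of this Fact at all---it is asserted to follow easily from the definition of satisfaction---and your proposal supplies exactly the routine verification being left implicit. Your key observations are all correct: the world-independence of the rigid-relation clause rests on rigidity of \(\varpi\) together with the fact that terms of rigid sort (and hence \((W,M)_{t}\)) are shared across worlds; the commutation \(h_{W_{k}}\bigl((W,M)_{t}\bigr) = (W',M')_{t}\) is indeed the uniqueness half of Lemma~\ref{lemma:htm-freeness} applied to the composite with the standard interpretation homomorphism; and the structural induction on actions is the one non-trivial ingredient. One refinement is needed in that last step: the statement you induct on should be the relation-level inclusion ``\((u,v) \in W_{\act}\) implies \((h(u),h(v)) \in W'_{\act}\) for arbitrary worlds \(u,v \in |W|\)'', not its instance for pairs of the form \((W_{k_{1}}, W_{k_{2}})\). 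The intermediate worlds produced by unfolding \(W_{\act_{1} \comp \act_{2}}\) and the chains witnessing membership in \((W_{\act})^{*}\) need not be denotations of nominal terms, since reachability of \(W\) is not assumed here, so the literal induction hypothesis would not apply to them. With that strengthening the induction closes, and the claim for \(\act(k_{1},k_{2})\) follows by instantiating \(u = W_{k_{1}}\) and \(v = W_{k_{2}}\) and using \(h(W_{k_{i}}) = W'_{k_{i}}\).
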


To state the \emph{satisfaction condition} -- and thus finalize the presentation of \(\HDFOLS\) -- let us first notice that every signature morphism \(\varphi \colon \Delta \to \Delta'\) induces appropriate \emph{translations of sentences} and \emph{reductions of models}, as follows:
every \(\Delta\)-sentence \(\gamma\) is translated to a \(\Delta'\)-sentence \(\varphi(\gamma)\) by replacing (usually in an inductive manner) the symbols in \(\Delta\) with symbols from \(\Delta'\) according to \(\varphi\);
and every \(\Delta'\)-model \((W', M')\) is reduced to a \(\Delta\)-model \((W', M') \red_{\varphi}\) that interprets every symbol \(x\) in \(\Delta\) as \((W', M')_{\varphi(x)}\).
When \(\varphi\) is an inclusion, we usually denote \((W', M') \red_{\varphi}\) by \((W', M') \red_{\Delta}\) -- in this case, the model reduct simply forgets the interpretation of those symbols in \(\Delta'\) that do not belong to \(\Delta\).

The following satisfaction condition can be proved by induction on the structure of \(\Delta\)-sentences.
Its argument is virtually identical with several other variants presented in the literature (see, e.g.~\cite{Diaconescu16}), hence we choose to present the result without a proof.

\begin{proposition}[Local satisfaction condition for signature morphisms]
  \label{proposition:morph-sat-cond}
  For every signature morphism \(\varphi \colon \Delta \to \Delta'\), \(\Delta'\)-model \((W', M')\), world \(w' \in |W'|\), and \(\Delta\)-sentence \(\gamma\), we have:
  \stepcounter{footnote}
  \footnotetext{Note that, by the definition of model reducts, \((W', M')\) and \((W', M') \red_{\varphi}\) have the same possible worlds.}
  \addtocounter{footnote}{-1}
  \[
    (W', M') \models^{w'} \varphi(\gamma)
    \quad\text{if and only if}\quad
    (W', M') \red_{\varphi} \models^{w'} \gamma.\footnotemark
    \pushQED{\qed}
    \qedhere
  \]
\end{proposition}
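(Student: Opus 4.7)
The plan is to proceed by structural induction on the $\Delta$-sentence $\gamma$, after establishing two auxiliary compatibility results: one for hybrid terms and one for structured actions. For terms, I would first show that for every nominal term $k$, $W'_{\varphi^{\nominal}(k)} = (W' \red_{\varphi^{\nominal}})_{k}$, and then that for every hybrid term $t \in T^{\Delta}_{k,s}$, we have $(W', M')_{\varphi(t)} = ((W', M') \red_{\varphi})_{t}$. Both facts are proved by induction on term structure: the nominal case is the standard term-translation lemma for first-order signature morphisms, while the hybrid-term case uses the freeness of the hybrid-term model (Lemma~\ref{lemma:htm-freeness}) — the two homomorphisms from $(T_{\Sigma^{\nominal}}, T^{\Delta})$ into $(W', M') \red_{\varphi}$ obtained by (i) translating along $\varphi$ and then interpreting and (ii) interpreting directly must coincide by uniqueness. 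For actions, an easy induction on the grammar $\act \Coloneqq \lambda \mid \act \comp \act \mid \act + \act \mid \act^{*}$ shows that $W'_{\varphi^{\nominal}(\act)} = (W' \red_{\varphi^{\nominal}})_{\act}$, using that diagrammatic composition, union, and reflexive-transitive closure all commute with taking reducts (since reducts preserve the carrier $|W'|$ and the binary modality interpretations, by definition).

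Armed with these lemmas, the base cases of atomic and action-relation sentences unfold mechanically: each clause of the satisfaction relation is stated in terms of interpretations of nominal terms, hybrid terms, actions, and relation symbols, and on each of these the morphism $\varphi$ acts by translation on the syntactic side and by reduction on the semantic side, so the two sides of the biconditional reduce to identical statements about $(W', M') \red_{\varphi}$.

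For the inductive step, the cases of retrieve $\at{k} \gamma$, negation, finite conjunction, necessity $\nec{\act}\gamma$, and next $\lnext{\sigma}\gamma$ are immediate from the induction hypothesis, together with the action lemma for the necessity case and the term lemma for retrieve and next. The two quantifier cases — store $\store{z} \gamma'$ and universal quantification $\forAll{X} \gamma''$ — require the extra observation that expansions along the inclusions $\Delta \hookrightarrow \Delta[z]$ (resp.\ $\Delta[X]$) interact well with the reduct along $\varphi$. Concretely, one exhibits a bijection between $\Delta[z]$-expansions of $(W', M') \red_{\varphi}$ and $\Delta'[z]$-expansions of $(W', M')$ (obtained by extending $\varphi$ with the identity on $z$) that commutes with both the reduct and the valuation of the fresh symbols at $w$, and likewise for blocks $X$ of nominal and rigid variables; the satisfaction of $\gamma'$ or $\gamma''$ at $w'$ on matching expansions then follows from the induction hypothesis applied to the extended signature morphism.

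The principal obstacle is precisely this last point: verifying that the square of signature extensions and the signature morphism $\varphi$ admits the pushout-style exchange of reducts and expansions, and that this exchange is compatible with the world-indexed structure of Kripke models — in particular, that expanding the rigid part of $(W', M') \red_{\varphi}$ by values for rigid variables in $X$ yields the same family of possible-world interpretations as first expanding $(W', M')$ along $\varphi(X)$ and then reducing. Once this bookkeeping is set up carefully, the quantifier cases reduce to an application of the induction hypothesis and the whole argument goes through uniformly.
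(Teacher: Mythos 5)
Your proposal is correct and follows exactly the route the paper indicates: the paper states this result without proof, remarking only that it is a routine structural induction on sentences identical to variants in the literature, and your argument (term/action compatibility lemmas, mechanical base cases, and the expansion--reduct exchange for the quantifier cases, with the induction hypothesis applied to the extended morphism) is precisely that standard induction, with the genuinely delicate point correctly identified and handled.
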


\minisec{Substitutions}
Consider two signature extensions \(\Delta[X]\) and \(\Delta[Y]\) with blocks of variables, and let \(X = X^{\nominal} \cup X^{\rigid}\) and \(Y = Y^{\nominal} \cup Y^{\rigid}\) be the partitions of \(X\) and \(Y\) into blocks of nominal and rigid variables.
A \emph{\(\Delta\)-substitution} \(\theta \colon X \to Y\) consists of a pair of functions
\(\theta^{\nominal} \colon X^{\nominal} \to T_{\Sigma^{\nominal}[Y^{\nominal}]}\) and
\(\theta^{\rigid} \colon X^{\rigid} \to T^{\Delta[Y]}_{k}\), where \(k\) is a nominal term.\footnote{Since the sorts of the variables are rigid, it does not matter which nominal term \(k\) we choose.}

Similarly to signature morphisms, \(\Delta\)-substitutions \(\theta \colon X \to Y\) determine translations of \(\Delta[X]\)-sentences into \(\Delta[Y]\)-sentences, and reductions of \(\Delta[Y]\)-models to \(\Delta[X]\)-models.
The proofs of the next two propositions are similar to the ones given in~\cite{Gaina17Her} for hybrid substitutions.

\begin{proposition}[Local satisfaction condition for substitutions]
  \label{proposition:subst-sat-cond}
  For every \(\Delta\)-substitution \(\theta \colon X \to Y\), \(\Delta[Y]\)-model \((W, M)\), world \(w \in |W|\), and \(\Delta[X]\)-sentence \(\gamma\), we have:
  \[
    (W, M) \models^{w} \theta(\gamma)
    \quad\text{if and only if}\quad
    (W, M) \red_{\theta} \models^{w} \gamma.
    \pushQED{\qed}\qedhere
  \]
\end{proposition}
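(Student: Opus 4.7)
The plan is to proceed by structural induction on the $\Delta[X]$-sentence $\gamma$, supported by an auxiliary lemma on the interpretation of terms. As a preliminary step, I would observe that $\theta$ extends canonically to a map on nominal terms in $T_{\Sigma^{\nominal}[X^{\nominal}]}$ by homomorphic extension, and to a map on hybrid terms in $T^{\Delta[X]}_{k, s}$ by the freeness property of Lemma~\ref{lemma:htm-freeness}. I would also record that $(W, M) \red_\theta$ shares its carriers, its possible worlds, and the interpretation of all symbols in $\Delta$ with $(W, M)$; only the interpretations of the variables in $X$ are redefined, precisely via $\theta$.

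The key auxiliary claim to prove is a term lemma: for every nominal term $k \in T_{\Sigma^{\nominal}[X^{\nominal}]}$, $((W, M) \red_\theta)_k = W_{\theta(k)}$, and for every hybrid term $t \in T^{\Delta[X]}_{k, s}$, $((W, M) \red_\theta)_t = (W, M)_{\theta(t)}$. Both statements follow by routine induction on the structure of terms from the definition of the reduct, using that flexible and rigid operations are interpreted the same way in $(W, M)$ as in $(W, M) \red_\theta$. Given this lemma, the atomic cases and the action-relation case reduce to direct calculations; for instance, for a flexible hybrid relation $\pi(k, t)$, the desired equivalence follows once we recognise that $(W, M)_{\theta(t)}$ coincides with $((W, M) \red_\theta)_t$ and $W_{\theta(k)}$ coincides with $((W, M) \red_\theta)_k$, while the interpretation of $\pi$ is identical in $(W, M)$ and in $(W, M) \red_\theta$.

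For the inductive step, the propositional connectives, the retrieve operator $\at{k} \gamma$, the necessity $\nec{\act} \gamma$, and the next operator $\lnext{\sigma} \gamma$ all follow directly from the induction hypothesis, using again that the set $|W|$ of worlds and the interpretations of the symbols in $\Delta$ (in particular the action relations $W_\act$ and the operations $W_\sigma$) are preserved by the reduct $\red_\theta$. The main obstacle, which I expect to be the technically delicate part, is the treatment of the two binder cases: store $\store{z} \gamma'$ and universal quantification $\forAll{X'} \gamma''$. Adopting the usual variable convention that the bound variables are fresh with respect to $Y$, the substitution $\theta$ extends to $\theta' \colon X \cup \{z\} \to Y \cup \{z\}$ (respectively $\theta' \colon X \cup X' \to Y \cup X'$) by the identity on the new variables, and the translations satisfy $\theta(\store{z} \gamma') = \store{z} \theta'(\gamma')$ and $\theta(\forAll{X'} \gamma'') = \forAll{X'} \theta'(\gamma'')$. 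The crucial technical step is then to establish the commutation of expansion and reduct: for the universal case, the map $(W'', M'') \mapsto (W'', M'') \red_{\theta'}$ is a bijection between $\Delta[Y \cup X']$-expansions of $(W, M)$ and $\Delta[X \cup X']$-expansions of $(W, M) \red_\theta$; and for the store case, $((W, M)^{z \assign w}) \red_{\theta'} = ((W, M) \red_\theta)^{z \assign w}$ for every $w \in |W|$. Once these commutation identities are in place, a direct application of the induction hypothesis to $\gamma'$ (respectively $\gamma''$) closes both binder cases.
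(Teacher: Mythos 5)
The paper does not actually spell out a proof of this proposition---it only remarks that the argument is similar to the one for hybrid substitutions in the cited reference---and your proposal is precisely that standard argument: a term lemma identifying \(((W, M) \red_{\theta})_{t}\) with \((W, M)_{\theta(t)}\) (and likewise for nominal terms), followed by structural induction on sentences, with the expansion/reduct commutation identities handling the two binder cases. Your outline is correct and isolates exactly the delicate points (freshness of bound variables and the bijection between expansions), so it matches the intended proof.
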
 

Proposition~\ref{proposition:subst-sat-cond}, together with Proposition~\ref{proposition:rm-gen-subst} below, has an important technical role in the initial-model and Herbrand's-theorem proofs presented in the later sections of the paper.

\begin{proposition} [Substitutions generated by expansions of reachable models]
  \label{proposition:rm-gen-subst}
  If \((W, M)\) is a reachable \(\Delta\)-model, then for every \(\Delta[X]\)-expansion \((W', M')\) of \((W, M)\) there exists a substitution \(\theta \colon X \to \emptyset\) such that \((W, M) \red_{\theta} = (W', M')\).
  \pushQED{\qed}\qedhere
\end{proposition}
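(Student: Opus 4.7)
The plan is to construct the substitution $\theta$ componentwise, harvesting the two layers of reachability that the hypothesis provides. Unfolding the definition, $(W, M)$ reachable means that the unique Kripke homomorphism $h \colon (T_{\Sigma^{\nominal}}, \{T^{\Delta}_{k}\}_{k}) \to (W, M)$ is surjective, so (i) every possible world $w \in |W|$ is of the form $W_{k}$ for some nominal term $k \in T_{\Sigma^{\nominal}}$, and (ii) for every such $k$, every element of $M_{W_{k}}$ is of the form $(W, M)_{t}$ for a hybrid term $t \in T^{\Delta}_{k}$. Because $(W', M')$ is a $\Delta[X]$-expansion of $(W, M)$, it already agrees with $(W, M)$ on every symbol of $\Delta$; what I need $\theta$ to do is to make the reduct $(W, M) \red_{\theta}$ assign to each fresh constant in $\Delta[X]$ the same value that $(W', M')$ does.

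For a nominal variable $z \in X^{\nominal}$ I would set $w_{z} \coloneqq W'_{z} \in |W|$, use (i) to pick a nominal term $k_{z} \in T_{\Sigma^{\nominal}}$ with $W_{k_{z}} = w_{z}$, and define $\theta^{\nominal}(z) \coloneqq k_{z}$. For a rigid variable $x \in X^{\rigid}$ of rigid sort $s \in S^{\rigid}$, the interpretation $m_{x} \coloneqq M'_{w, x}$ is independent of $w$ because $x$ is a rigid constant; I would then fix a single nominal term $k_{0}$ and use (ii) to produce a hybrid term $t_{x} \in T^{\Delta}_{k_{0}, s}$ with $(W, M)_{t_{x}} = m_{x}$. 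The third clause in the definition of hybrid terms then ensures, since $s$ is rigid, that $t_{x} \in T^{\Delta}_{k, s}$ for every nominal term $k$, so setting $\theta^{\rigid}(x) \coloneqq t_{x}$ yields a well-typed $\Delta$-substitution $\theta \colon X \to \emptyset$.

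To close, I would verify that $(W, M) \red_{\theta} = (W', M')$. Both are $\Delta[X]$-expansions of $(W, M)$, hence agree on every symbol of $\Delta$; on a fresh nominal constant $z$ the reduct evaluates to $W_{\theta^{\nominal}(z)} = W_{k_{z}} = w_{z} = W'_{z}$, and on a fresh rigid constant $x$ it evaluates to $(W, M)_{\theta^{\rigid}(x)} = (W, M)_{t_{x}} = m_{x} = M'_{w, x}$ for every $w$, as required. The only subtlety I expect is purely bookkeeping -- keeping clean the separation between the nominal layer, where reachability of $W$ furnishes the needed nominal terms, and the rigid data layer, where the construction relies crucially on the rigid-sort clause of the hybrid-term definition (and its counterpart in Proposition~\ref{proposition:tm-reachability}) that lets a single witness chosen over $k_{0}$ serve at every nominal term simultaneously.
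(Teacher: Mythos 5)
Your construction is correct and is exactly the argument the result calls for: the paper itself states this proposition without proof (deferring to the analogous result for hybrid substitutions in~\cite{Gaina17Her}, proved there in the simpler setting where nominals are constants), so there is no in-paper proof to diverge from. You correctly use both layers of surjectivity of the unique homomorphism from the standard term model --- reachability of $W$ for the nominal variables, and surjectivity of $h_{k_0}$ together with rule~\textsf{(3)} of the hybrid-term definition for the rigid variables --- and the only implicit assumption (that some nominal term $k_0$ exists) is the same one the paper already makes in its definition of $\theta^{\rigid}$.
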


\minisec{Expressive power}
Fact~\ref{fact:sat-atoms-actrel} highlights one of the main distinguishing features of \(\HDFOLS\):
the satisfaction of atomic sentences, whether they involve flexible symbols or not, does not depend on the possible world where the sentences are evaluated.
This contrasts the standard approach in hybrid logic, where each nominal is regarded as an atomic sentence satisfied precisely at the world that corresponds to the interpretation of that nominal.
In \(\HDFOLS\), the dependence of the satisfaction of sentences on possible worlds is explicit rather than implicit, and is achieved through the \emph{store} operator.
These two approaches are often inter-definable:
every nominal sentence \(k\) is semantically equivalent to \(\store{z} z = k\);\footnote{Formally, in hybrid logic, \((W, M) \models^{w} k\) if and only if \(w = W_{k}\).}
moreover, when the logical system admits existential quantification over nominal variables, every \emph{store} sentence \(\store{z} \gamma\) is semantically equivalent to \(\Exists{z} (z \land \gamma)\).
Following the lines of~\cite[Section~4.3]{Gaina17Her}, the first part of this correspondence can be used to show that even without considering action relations, \(\HDFOLS\) is strictly more expressive than other standard hybrid logics constructed from the same base logic such as the hybrid first-order logic with rigid symbols~\cite{DiaconescuM16,Diaconescu16}.

As it is often the case, other sentence-building operators can be derived from those presented on page~\pageref{sentence-building-operators} through standard constructions: for instance, \emph{implication} can be defined based on conjunction and negation, \emph{existential quantification} based on universal quantification and negation, and the modal-logic \emph{possibility} operator based on necessity and negation.

\emph{Necessity} and \emph{next} can be derived as well, since all sentences \(\nec{\act} \gamma\) and \(\lnext{\sigma} \gamma\) are semantically equivalent to \(\store{z} \forAll{z'} \act(z, z') \implies \at{z'} \gamma\) and \(\store{z} \at{\sigma(z)} \gamma\), respectively.
However, we have chosen to present them independently, as there are logical systems that do not support the \emph{store} operator or \emph{universal quantifiers}.
The results in this paper are modular; their proofs are not based on semantic abbreviations such as those listed above, nor on the existence of certain operators -- with one exception: the main theorems rely on the existence of \emph{retrieve}.


\section{Initiality}
\label{section:initiality}

Hybrid-dynamic \emph{Horn clauses} are obtained from atomic sentences by repeated applications of%
\begin{inparenum}
  \inparitem retrieve
  \inparitem implication such that the condition is a conjunction of atomic sentences or action relations, 
  \inparitem store,
  \inparitem universal quantification,
  \inparitem necessity, or
  \inparitem next.
\end{inparenum}
We denote by \(\HDCLS\) the Horn-clause fragment of \(\HDFOLS\) -- i.e.\ with the same signatures and models as \(\HDFOLS\), and with sentences defined as Horn clauses.
This restriction is crucial, because it entails that any set of sentences in \(\HDCLS\) has an initial model, as we prove next.


\subsection{The basic level}
\label{subsection:basic-level}

We first deal with the initiality property for the basic level of atoms.
The following result can also be found in~\cite{Gaina17Her}, though in a simplified setting where \(\Sigma^{\nominal}\) consists only of constants.

\begin{proposition}[Initiality of nominal equations and relations]
  \label{proposition:HDCLSn-initiality}
  Every set\/ \(\Gamma_{\nominal}\) of nominal equations or relations admits a reachable initial model \((W^{\Gamma_{\nominal}}, M^{\Gamma_{\nominal}})\).
\end{proposition}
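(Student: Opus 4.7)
The plan is to reduce the statement to two well-understood pieces: the construction of an initial $\Sigma^{\nominal}$-model satisfying $\Gamma_{\nominal}$ at the nominal level, and the hybrid-term construction already developed in Lemma~\ref{lemma:htm-freeness} and Proposition~\ref{proposition:tm-reachability}. The key observation that makes this decomposition work is that every sentence in $\Gamma_{\nominal}$ is a nominal equation or nominal relation, so its satisfaction depends only on the $W$-component of a Kripke structure $(W, M)$; the $M$-component is thus free to be chosen in whichever way best supports initiality. Concretely, I would first build $W^{\Gamma_{\nominal}}$ by the standard technique from many-sorted positive Horn logic: take the term algebra $T_{\Sigma^{\nominal}}$, quotient its carrier by the congruence $\equiv$ generated by the pairs $(k_{1}, k_{2})$ for which $k_{1} = k_{2} \in \Gamma_{\nominal}$, and interpret each relation symbol $\lambda \in P^{\nominal}$ as the set of tuples $([k'_{1}], \ldots, [k'_{n}])$ such that $\lambda(k'_{1}, \ldots, k'_{n}) \in \Gamma_{\nominal}$. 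The resulting $\Sigma^{\nominal}$-model is reachable by construction, satisfies $\Gamma_{\nominal}$, and is initial in the category of $\Sigma^{\nominal}$-models of $\Gamma_{\nominal}$ (generalizing the constants-only version already mentioned from \cite{Gaina17Her}).

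Next, I would take $M^{\Gamma_{\nominal}} \coloneqq T^{W^{\Gamma_{\nominal}}}$, i.e.\ the hybrid-term model over $W^{\Gamma_{\nominal}}$. By Lemma~\ref{lemma:htm-freeness}, the pair $(W^{\Gamma_{\nominal}}, T^{W^{\Gamma_{\nominal}}})$ is indeed a $\Delta$-model; by Proposition~\ref{proposition:tm-reachability}, since $W^{\Gamma_{\nominal}}$ is reachable so is $(W^{\Gamma_{\nominal}}, T^{W^{\Gamma_{\nominal}}})$; and satisfaction of $\Gamma_{\nominal}$ by $(W^{\Gamma_{\nominal}}, T^{W^{\Gamma_{\nominal}}})$ is inherited from the analogous property of $W^{\Gamma_{\nominal}}$, using the world-independence of atomic sentences recorded in Fact~\ref{fact:sat-atoms-actrel}.

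For initiality in the category of $\Delta$-models of $\Gamma_{\nominal}$, consider an arbitrary $(W', M') \models \Gamma_{\nominal}$. Its nominal component $W'$ is a $\Sigma^{\nominal}$-model of $\Gamma_{\nominal}$, so initiality of $W^{\Gamma_{\nominal}}$ provides a unique $\Sigma^{\nominal}$-homomorphism $f \colon W^{\Gamma_{\nominal}} \to W'$, and Lemma~\ref{lemma:htm-freeness} then yields a unique $\Delta$-homomorphism $h \colon (W^{\Gamma_{\nominal}}, T^{W^{\Gamma_{\nominal}}}) \to (W', M')$ whose nominal part is $f$. Uniqueness of $h$ over all $\Delta$-homomorphisms (not just those extending $f$) follows because the nominal component of any such homomorphism is a $\Sigma^{\nominal}$-homomorphism $W^{\Gamma_{\nominal}} \to W'$, and hence must coincide with $f$ by initiality at the nominal level. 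The one point that needs some care — rather than a genuine obstacle — is verifying that the minimal-relation interpretation chosen for $W^{\Gamma_{\nominal}}$ is compatible with mapping into an arbitrary model of $\Gamma_{\nominal}$; this is immediate from the preservation of positive atoms along homomorphisms, as in Fact~\ref{fact:sat-atoms-actrel}(2).
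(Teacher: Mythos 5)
Your proposal is correct and follows essentially the same route as the paper: quotient \(T_{\Sigma^{\nominal}}\) by a congruence determined by the nominal equations, interpret the nominal relations minimally, and then take the hybrid-term model over the resulting \(W^{\Gamma_{\nominal}}\), with initiality obtained by combining nominal-level initiality with the freeness of Lemma~\ref{lemma:htm-freeness}. The only cosmetic difference is that the paper defines the congruence and the relation interpretations semantically (via \(\Gamma_{\nominal} \models k_{1} = k_{2}\) and \(\Gamma_{\nominal} \models \lambda(k)\)) rather than as the syntactically generated closure, but for ground atoms these coincide, so the two presentations are interchangeable.
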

\begin{proof}
  Consider the following relation on the first-order term model \(T_{\Sigma^{\nominal}}\):
  \(k_{1} \equiv k_{2}\) if and only if \(\Gamma_{\nominal} \models k_{1} = k_{2}\), for any two nominal terms \(k_{1}\) and \(k_{2}\) over \(\Delta\).

  It is straightforward to check that \(\equiv\) is \(\Sigma^{\nominal}\)-congruence on \(T_{\Sigma^{\nominal}}\).
  Therefore, we can define \(W^{\Gamma_{\nominal}}\) as the \(\Sigma^{\nominal}\)-model obtained from \(T_{\Sigma^{\nominal}} /{\equiv}\) by interpreting each relation symbol \(\lambda \in P^{\nominal}\) as the set \(\{ [k] \in T_{\Sigma^{\nominal}} /{\equiv} \mid \Gamma_{\nominal} \models \lambda(k) \}\), where \([k]\) denotes the congruence class of \(k\) modulo \(\equiv\).

  Now let \((W^{\Gamma_{\nominal}}, M^{\Gamma_{\nominal}})\) be the (reachable) hybrid-term model defined over \(W^{\Gamma_{\nominal}}\).
  That is, the possible worlds of \((W^{\Gamma_{\nominal}}, M^{\Gamma_{\nominal}})\) are congruence classes of nominal terms modulo \(\equiv\), and for each nominal term \(k\), the elements of \(M^{\Gamma_{\nominal}}_{[k]}\) are obtained from standard hybrid terms (as per the definition in Section~\ref{section:HDFOLS}) by replacing each nominal with its congruence class.
  For notational convenience, we denote the interpretation of a term \(t\) in \((W^{\Gamma_{\nominal}}, M^{\Gamma_{\nominal}})\) by \([t]\).

  All is left to prove is the universal property of \((W^{\Gamma_{\nominal}}, M^{\Gamma_{\nominal}})\).
  First, notice that, by construction, we have \((W^{\Gamma_{\nominal}}, M^{\Gamma_{\nominal}}) \models \Gamma_{\nominal}\).
  Then, for every Kripke model \((W, M)\) that satisfies \(\Gamma_{\nominal}\), there exists a unique homomorphism \(h \colon (W^{\Gamma_{\nominal}}, M^{\Gamma_{\nominal}}) \to (W, M)\), where:
  \begin{itemize}
  \item on possible worlds, \(h \colon W^{\Gamma_{\nominal}} \to W\) is the unique \(\Sigma^{\nominal}\)-homomorphism between \(W^{\Gamma_{\nominal}}\) and \(W\), defined by \(h([k]) = W_{k}\) for all \(k \in T_{\Sigma^{\nominal}}\), and 

  \item for every \(k \in T_{\Sigma^{\nominal}}\), \(h_{[k]} \colon M^{\Gamma_{\nominal}}_{[k]} \to M_{W_{k}}\) is defined by \(h_{[k]}([t]) = (W, M)_{t}\) for all \(t \in T^{\Delta}_{k}\).
  \end{itemize}
  It is straightforward to check that the definition of \(h\) is consistent, since \((W, M) \models \Gamma_{\nominal}\).
\end{proof}

In order to extend Proposition~\ref{proposition:HDCLSn-initiality} to arbitrary sets of \(\HDCLS\)-sentences, we use a notion of congruence on a Kripke structure and the universal property of its corresponding quotient.

\begin{definition}[Congruence]
  \label{definition:congruence}
  Let \((W, M)\) be a Kripke structure over a \(\HDFOLS\)-signature \(\Delta\).
  A \emph{\(\Delta\)-congruence} \({\equiv} = \{\equiv_{w}\}_{w \in |W|}\) on \((W, M)\) is a family of \(\Sigma\)-congruences \(\equiv_{w}\) on \(M_{w}\), for each \(w \in |W|\), such that \((\equiv_{w_{1}, s}) = (\equiv_{w_{2}, s})\) for all \(w_{1}, w_{2} \in |W|\) and \(s \in S^{\rigid}\).
\end{definition}

The following construction and universal property are straightforward generalizations of their first-order counterparts, and have been studied in several other papers in the literature (see, e.g.~\cite{Gaina17Bir}).
For that reason, we include them for further reference without a proof.

\begin{proposition}[Quotient model]
  \label{proposition:quotient-model}
  Every \(\HDCLS\)-congruence \(\equiv\) on a \(\Delta\)-model \((W, M)\) determines a \emph{quotient-model homomorphism} \((\_ /{\equiv}) \colon (W, M) \to (W, M/{\equiv}) \) that acts as an identity on possible worlds, and for which \((M/{\equiv})_{w}\) is the quotient \(\Sigma\)-model \(M_{w}/{\equiv_{w}}\).
  
  Moreover, \((\_ / {\equiv})\) has the following \emph{universal property}:
  for any Kripke homomorphism \(h \colon (W, M) \to (W', M')\) such that \({\equiv} \subseteq \ker(h)\),\footnote{This means that \(h_{w, s}(a_{1}) = h_{w, s}(a_{2})\) for all \(a_{1}, a_{2} \in M_{w, s}\) such that \(a_{1} \equiv_{w, s} a_{2}\).}
  there exists a unique homomorphism \(h' \colon (W , M /{\equiv}) \to (W', M')\)
  such that \((\_/{\equiv}) \comp h' = h\).\footnote{We use the diagrammatic notation for function composition, where \(((\_/{\equiv}) \comp h')(a) = h'(a/{\equiv})\).}
  \qed
\end{proposition}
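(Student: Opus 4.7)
The plan is to mimic the standard first-order quotient construction worldwise, and then verify that the rigid-agreement conditions carry over from $(W,M)$ and from $\equiv$ to the quotient.

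First I would construct $(W, M/{\equiv})$. For each $w \in |W|$, take $(M/{\equiv})_w$ to be the usual $\Sigma$-quotient $M_w /{\equiv_w}$: carriers are equivalence classes $M_{w,s}/{\equiv_{w,s}}$, operations are $\sigma^{M_w/{\equiv_w}}([a_1], \dots, [a_n]) = [\sigma^{M_w}(a_1, \dots, a_n)]$ (well-defined because $\equiv_w$ is a $\Sigma$-congruence on $M_w$), and relations are the images of the corresponding relations under the quotient maps. The key compatibility check is that for every rigid sort $s \in S^{\rigid}$ and worlds $w_1, w_2 \in |W|$, the carriers $M_{w_1,s}/{\equiv_{w_1,s}}$ and $M_{w_2,s}/{\equiv_{w_2,s}}$ coincide, and similarly for the interpretations of rigid operation and relation symbols: this follows immediately from $M_{w_1,s} = M_{w_2,s}$ (rigidness of $(W,M)$) together with $\equiv_{w_1,s} = \equiv_{w_2,s}$ (rigidness of $\equiv$), plus the fact that the interpretations of rigid function and relation symbols in $M_{w_1}$ and $M_{w_2}$ agree. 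Hence $(W, M/{\equiv})$ is a bona fide $\Delta$-model.

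Next I would define the quotient homomorphism. On the nominal component let it be $\mathrm{id}_W \colon W \to W$, and for each $w \in |W|$ let $(\_/{\equiv})_w \colon M_w \to (M/{\equiv})_w$ be the canonical surjection $a \mapsto [a]_{\equiv_w}$. That this family is a $\Sigma$-homomorphism at each world is the standard quotient fact; the cross-world rigid-sort agreement $(\_/{\equiv})_{w_1,s} = (\_/{\equiv})_{w_2,s}$ for $s \in S^{\rigid}$ follows because both maps are the same canonical surjection on the shared carrier $M_{w_1,s} = M_{w_2,s}$ modulo the shared congruence $\equiv_{w_1,s} = \equiv_{w_2,s}$.

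For the universal property, given $h \colon (W,M) \to (W', M')$ with ${\equiv} \subseteq \ker(h)$, I would set $h'$ to coincide with $h$ on possible worlds and define $h'_w \colon (M/{\equiv})_w \to M'_{h(w)}$ by $h'_w([a]) = h_w(a)$. This is well-defined exactly by the hypothesis ${\equiv} \subseteq \ker(h)$, and inherits the $\Sigma$-homomorphism laws from $h_w$ together with the definition of operations and relations on $M_w/{\equiv_w}$. The rigid-sort agreement condition for $h'$ transfers from that of $h$ via the same observation as above. The factorisation $(\_/{\equiv}) \comp h' = h$ holds by construction; uniqueness follows because $(\_/{\equiv})$ is surjective on each carrier and identity on worlds, so any competitor for $h'$ is forced to take the stated values.

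The only delicate point — and the step I would write out most carefully — is the threefold bookkeeping of rigidness (of the carriers in $(W,M)$, of the congruence $\equiv$, and of the induced homomorphism $h'$) at rigid sorts and on rigid operation/relation symbols, since the entire statement would collapse without it. Everything else is a routine worldwise lift of the classical first-order quotient theorem, as noted in the preceding remark referring to~\cite{Gaina17Bir}.
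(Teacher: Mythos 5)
Your proof is correct and follows exactly the route the paper has in mind: the paper deliberately omits this proof as a "straightforward generalization of the first-order counterpart," and your worldwise quotient construction, together with the careful tracking of rigidity of the carriers, of the congruence, and of the induced homomorphism at rigid sorts (noting that rigid operation and relation symbols have arities over $S^{\rigid}$, so the quotient maps agree there across worlds), is precisely that generalization. Nothing is missing.
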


\begin{proposition}
  \label{proposition:Gamma-congruence}
  Let\/ \(\Gamma\) be a set of nominal or hybrid equations over \(\Delta\), and\/
  \(W\) a reachable \(\Sigma^{\nominal}\)-model such that\/
  \(\Gamma \models \Gamma_{W}\),\footnote{By the definition of \emph{semantic entailment}, this means that all models satisfying \(\Gamma\) satisfy \(\Gamma_{W}\) as well.} where \(\Gamma_{W} = \{k_{1} = k_{2} \in \Sen(\Sigma^{\nominal}) \mid W_{k_{1}} = W_{k_{2}}\}\).
  Then \(\Gamma\) generates a congruence \(\equiv\) on \((W, T^{W})\) defined by
  \(\tau_{1} \equiv_{w} \tau_{2}\) iff there exist a nominal term \(k\) and hybrid terms \(t_{1}\) and \(t_{2}\) such that\/
  \(\Gamma \models t_{1} =_{k} t_{2}\), \(w = W_{k}\), and \(\tau_{i} = T^{W}_{w, t_{i}}\).
\end{proposition}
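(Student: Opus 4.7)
The plan is to verify in turn the four conditions that make \(\equiv\) a \(\Delta\)-congruence: each \(\equiv_w\) should be reflexive, symmetric, transitive, compatible with every \(\Sigma\)-operation, and the rigid-sort sharing property \((\equiv_{w_1, s}) = (\equiv_{w_2, s})\) should hold for every rigid sort \(s\).
Reflexivity is immediate from Proposition~\ref{proposition:tm-reachability}: because \(W\) is reachable, so is \((W, T^W)\), and hence every \(\tau \in T^W_{w, s}\) equals \(T^W_{w, t}\) for some nominal term \(k\) with \(W_{k} = w\) and some hybrid term \(t \in T^{\Delta}_{k, s}\), whereupon the trivial entailment \(\Gamma \models t =_{k} t\) supplies the required witness.
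Symmetry is built into the definition by the symmetry of the equation \(t_{1} =_{k} t_{2}\).

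Transitivity is the main obstacle, and I would address it through a preparatory semantic lemma:
if \(t \in T^{\Delta}_{k}\) and \(t' \in T^{\Delta}_{k'}\) satisfy \(T^W_{w, t} = T^W_{w, t'}\) with \(w = W_{k} = W_{k'}\), then \(M''_{M''_{k}, t} = M''_{M''_{k'}, t'}\) holds in every \(\Delta\)-model \((W'', M'')\) of \(\Gamma\).
I would prove this via Lemma~\ref{lemma:htm-freeness}: since \((W'', M'') \models \Gamma \models \Gamma_{W}\) and \(W\) is reachable, the unique \(\Sigma^{\nominal}\)-homomorphism \(T_{\Sigma^{\nominal}} \to W''\) factors through \(W\), yielding a \(\Sigma^{\nominal}\)-homomorphism \(f \colon W \to W''\), which lifts uniquely to a Kripke homomorphism \(\hat{f} \colon (W, T^W) \to (W'', M'')\).
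Uniqueness forces \(\hat{f}\) to send \(T^W_{w, t}\) to \(M''_{M''_{k}, t}\) and similarly for \(t'\), so the claim follows.
With this in hand, combining two witnesses \((k_{1}, t_{11}, t_{12})\) and \((k_{2}, t_{22}, t_{23})\) for consecutive equivalences is routine: the identity \(\Gamma \models k_{1} = k_{2}\) (a consequence of \(W_{k_1} = W_{k_2}\) and \(\Gamma \models \Gamma_{W}\)) lets me translate the second witness into the context \(k_{1}\) as \((k_1, \tilde{t}_{22}, \tilde{t}_{23})\), and the preparatory lemma yields \(\Gamma \models t_{12} =_{k_{1}} \tilde{t}_{22}\), so that chaining the three resulting equations produces the sought witness \((k_1, t_{11}, \tilde{t}_{23})\).

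For compatibility with \(\Sigma\)-operations, I would again normalise the witnesses for the arguments to a common context \(k\) by the same translation technique and apply the operation componentwise, relying on the fact that semantic entailment is closed under congruence.
For the rigid-sort sharing condition, I would use that rigid-sort hybrid terms belong to every \(T^{\Delta}_{k}\) by rule~(3) and that their interpretations in any Kripke structure are world-independent, so any witness valid at \(w_1\) transports to a witness at \(w_2\) by picking a nominal term \(k'\) with \(W_{k'} = w_2\), which exists by reachability of \(W\).
Among these steps, the semantic sub-lemma underlying transitivity is the delicate one, being the only place where the interplay between the freeness of the hybrid-term model and the hypothesis \(\Gamma \models \Gamma_{W}\) is genuinely needed; everything else is a routine verification chasing the universal properties already in place.
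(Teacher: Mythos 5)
Your proposal is correct and follows essentially the same route as the paper: the heart of both arguments is the observation that, because \(\Gamma \models \Gamma_{W}\) and \(W\) is reachable, every model of \(\Gamma\) receives a Kripke homomorphism from \((W, T^{W})\) (via the quotient property on nominals and Lemma~\ref{lemma:htm-freeness}), which yields exactly your ``preparatory semantic lemma'' and the paper's independence-of-\(k\) observation. The remaining verifications (reflexivity and symmetry from reachability, transitivity by normalising witnesses to a common nominal context, compatibility with operations, and rigid-sort sharing via world-independence of rigid terms) match the paper step for step.
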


\begin{proof}
  We first show that for every model \((W', M')\) of \(\Gamma\) there exists a homomorphism \((W, T^{W}) \to (W', M')\).
  Since \(W\) is reachable, the unique homomorphism \(q \colon T_{\Sigma^{\nominal}} \to W\) is surjective.
  If \((W', M') \models \Gamma\), then \(W' \models \Gamma_{W}\).
  Therefore, the unique homomorphism \(f \colon T_{\Sigma^{\nominal}} \to W'\) satisfies \(\ker(q) \subseteq \ker(f)\).
  By the universal property of first-order model quotients, there exists a morphism \(h \colon W \to W'\) such that \(q \comp h = f\),
  which can then be uniquely extended to a Kripke homomorphism \(h \colon (W, T^{W}) \to (W', M')\) as per Lemma~\ref{lemma:htm-freeness}.

  The above observation allows us to prove that the definition of \(\equiv_{w}\) does not depend on the choice of \(k\).
  That is, if \(w = W_{k} = W_{k'}\) and \(\Gamma \models t_{1} =_{k} t_{2}\),
  then there exist \(t'_{1}, t'_{2} \in T^{\Delta}_{k'}\) such that \(T^{W}_{w, t_{i}} = T^{W}_{w, t'_{i}}\) and \(\Gamma \models t'_{1} =_{k'} t'_{2}\).
  It suffices to consider the following list of inferences:
  \begin{proofsteps}{22.5em}
    \(T^{W}_{w, t_{i}} = T^{W}_{w, t'_{i}}\) for some \(t'_{1}, t'_{2} \in T^{\Delta}_{k'}\)
    & by Proposition~\ref{proposition:tm-reachability}
    \\
    \label{prop:cong0-1}%
    \((W', M')_{t_{i}} = h(T^{W}_{w, t_{i}}) = h(T^{W}_{w, t'_{i}}) = (W', M')_{t'_{i}}\)
    \newline for all \(\Delta\)-models \((W', M')\) such that \((W', M') \models \Gamma\)
    & since there exists a morphism \(h \colon (W, T^{W}) \to (W', M')\)
    \\
    \label{prop:cong0-2}%
    \((W', M')_{t_{1}} = (W', M')_{t_{2}}\)
    \newline for all \(\Delta\)-models \((W', M')\) such that \((W', M') \models \Gamma\)
    & since \(\Gamma \models t_{1} =_{k} t_{2}\)
    \\
    \((W', M')_{t'_{1}} = (W', M')_{t'_{2}}\)
    \newline for all \(\Delta\)-models \((W', M')\) such that \((W', M') \models \Gamma\)
    & from \ref{prop:cong0-1} and \ref{prop:cong0-2}
    \\
    \(\Gamma \models t'_{1} =_{k'} t'_{2}\)
    & by the definition of \(\models\)
  \end{proofsteps}

  The fact that \(\equiv_{w}\) is both reflexive and symmetric is a straightforward consequence of the reachability of \(T^{W}\).
  Therefore, we focus on the transitivity and the compatibility of \(\equiv_{w}\) with the operations in \(F\).
  For transitivity, suppose \(\tau_{1} \equiv_{w} \tau_{2}\) and \(\tau_{2} \equiv_{w} \tau_{3}\).
  Then:
  \begin{proofsteps}{22.5em}
    \label{prop:cong1-1}%
    \(\Gamma \models t_{1} =_{k} t_{2}\), \(w = W_{k}\), and \(\tau_{i} = T^{W}_{w, t_{i}}\) for \(i \in \{1, 2\}\),
    \newline for some nominal term \(k\) and hybrid terms \(t_{1}, t_{2} \in T^{\Delta}_{k}\)
    & by the definition of \(\equiv_{w}\)
    \\
    \label{prop:cong1-2}%
    \(\Gamma \models t'_{2} =_{k} t'_{3}\), \(w = W_{k}\), and \(\tau_{i} = T^{W}_{w, t'_{i}}\) for \(i \in \{2, 3\}\),
    \newline for some nominal term \(k\) and hybrid terms \(t'_{2}, t'_{3} \in T^{\Delta}_{k}\)
    & by the definition of \(\equiv_{w}\)
    \\
    \((W', M')_{t_{2}} = h(T^{W}_{w, t_{2}}) = h(T^{W}_{w, t'_{2}}) = (W', M')_{t'_{2}}\)
    \newline for all \(\Delta\)-models \((W', M')\) such that \((W', M') \models \Gamma\)
    & since there exists a morphism \(h \colon (W, T^{W}) \to (W', M')\)
    \\
    \label{prop:cong1-3}%
    \(\Gamma \models t_{2} =_{k} t'_{2}\)
    & by the definition of \(\models\)
    \\
    \(\Gamma \models t_{1} =_{k} t'_{3}\)
    & from \ref{prop:cong1-1}, \ref{prop:cong1-3}, and \ref{prop:cong1-2}
    \\
    \(\tau_{1} \equiv_{w} \tau_{3}\)
    & by the definition of \(\equiv_{w}\)
  \end{proofsteps}

  For the compatibility of \(\equiv_{w}\) with the operations in \(F\), assume \(\sigma \in F_{\ari \to s}\) and \(\tau_{1}, \tau_{2} \in T^{W}_{w, \ari}\) such that \(\tau_{1} \equiv_{w} \tau_{2}\).
  There is virtually no distinction between the case where \(\sigma\) is rigid and the case where \(\sigma\) is flexible, hence it suffices to make explicit the proof for \(\sigma \in F^{\rigid}_{\ari \to s}\):
  \begin{proofsteps}{22.5em}
    \label{prop:cong1-1}%
    \(\Gamma \models t_{1} =_{k} t_{2}\), \(w = W_{k}\), and \(\tau_{i} = T^{W}_{w, t_{i}}\) for \(i \in \{1, 2\}\),
    \newline for some nominal \(k\) and hybrid terms \(t_{1}, t_{2} \in T^{\Delta}_{k}\)
    & by the definition of \(\equiv_{w}\)
    \\
    \(\Gamma \models \sigma(t_{1}) =_{k} \sigma(t_{2})\)
    & by the general properties of \(\models\)
    \\
    \(T^{W}_{w,\sigma(t_{1})} \equiv_{w} T^{W}_{w,\sigma(t_{2})}\)
    & by the definition of \(\equiv_{w}\)
    \\
    \((T^{W}_{w})_{\sigma}(\tau_{1}) \equiv_{w} (T^{W}_{w})_{\sigma}(\tau_{2})\)
    & since \(T^{W}_{w,\sigma(t_{i})} = (T^{W}_{w})_{\sigma}(\tau_{i})\)
  \end{proofsteps}

  Finally, we need to show that the relations \(\{\equiv_{w}\}_{w \in |W|}\) coincide on rigid sorts, that is
  \((\equiv_{w, s}) = (\equiv_{w', s})\) for all \(w, w' \in |W|\) and \(s \in S^{\rigid}\).
  Suppose \(\tau_{1} \equiv_{w, s} \tau_{2}\).
  It follows that:
  \begin{proofsteps}{22.5em}
    \(\Gamma \models t_{1} =_{k, s} t_{2}\), \(w = W_{k}\), and \(\tau_{i} = T^{W}_{w, t_{i}}\) for \(i \in \{1, 2\}\),
    \newline for some nominal \(k\) and hybrid terms \(t_{1}, t_{2} \in T^{\Delta}_{k, s}\)
    & by the definition of \(\equiv_{w, s}\)
    \\
    \(w' = W_{k'}\) for some nominal \(k'\)
    & since \(W\) is reachable
    \\
    \(\Gamma \models t_{1} =_{k', s} t_{2}\) and \(t_{1}, t_{2} \in T^{\Delta}_{k', s}\)
    & since \(s\) is rigid
    \\
    \(\tau_{1} \equiv_{w', s} \tau_{2}\)
    & by the definition of \(\equiv_{w', s}\)
    \qedhere
  \end{proofsteps}
\end{proof}

Unlike first-order congruences and the \(\Delta\)-congruences presented in Definition~\ref{definition:congruence}, the general notion of Kripke congruence is significantly more complex when it involves the formation of a quotient model for the possible worlds.
In this paper, we consider congruences over hybrid-term models, for which the quotients can be obtained in two stages: first on possible worlds, then on their local models.
This constitutes a step forward in understanding Kripke congruences and their implications.
Therefore, Proposition~\ref{proposition:Gamma-congruence} is an interesting result in its own right, since it deals with the congruence generated by a set of equations.
We use it next as a basis for proving the existence of initial models for atomic sentences.

\begin{theorem} [Atomic initiality]
  \label{theorem:HDCLS0-initiality}
  Every set\/ \(\Gamma\) of atomic sentences over a \(\HDCLS\)-signature \(\Delta = (\Sigma^{\nominal}, \Sigma^{\rigid} \subseteq \Sigma)\) admits a reachable initial model \((W^{\Gamma}, M^{\Gamma})\).
\end{theorem}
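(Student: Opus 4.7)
The plan is to construct $(W^{\Gamma}, M^{\Gamma})$ in three stages, by separating $\Gamma$ according to the five kinds of atomic sentences, and then combining several results already established in the paper.

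First I would isolate the nominal fragment $\Gamma_{\nominal} \subseteq \Gamma$ (the nominal equations and nominal relations) and apply Proposition~\ref{proposition:HDCLSn-initiality} to obtain a reachable initial $\Sigma^{\nominal}$-level: a $\Sigma^{\nominal}$-model $W^{\Gamma_{\nominal}}$, carried by $T_{\Sigma^{\nominal}}/{\equiv}$, together with its hybrid-term expansion $T^{W^{\Gamma_{\nominal}}}$. Since every model of $\Gamma$ in particular satisfies $\Gamma_{\nominal}$, its nominal reduct factors uniquely through $W^{\Gamma_{\nominal}}$; this will take care of the universal property on possible worlds.

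Next, let $\Gamma_{eq}$ be the set of hybrid equations in $\Gamma$. Because $\Gamma \models \Gamma_{\nominal}$ and $W^{\Gamma_{\nominal}}$ is defined exactly by the nominal equations entailed by $\Gamma_{\nominal}$, the hypothesis of Proposition~\ref{proposition:Gamma-congruence} is met with $W := W^{\Gamma_{\nominal}}$ and $\Gamma_{eq}$ in place of $\Gamma$. This yields a congruence $\equiv$ on the hybrid-term model $(W^{\Gamma_{\nominal}}, T^{W^{\Gamma_{\nominal}}})$, and Proposition~\ref{proposition:quotient-model} produces the quotient Kripke structure, on whose carrier I would then define the interpretations of the remaining relation symbols by
\[
  [t] \in M^{\Gamma}_{[k],\,\varpi} \ \text{iff}\ \Gamma \models \varpi(t),\qquad
  [t] \in M^{\Gamma}_{[k],\,\pi} \ \text{iff}\ \Gamma \models \pi(k, t),
\]
for $\varpi \in P^{\rigid}$ and $\pi \in P^{\flexible}$. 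The resulting structure $(W^{\Gamma}, M^{\Gamma})$ is reachable by construction (it is a homomorphic image of $(W^{\Gamma_{\nominal}}, T^{W^{\Gamma_{\nominal}}})$, which is reachable by Proposition~\ref{proposition:tm-reachability}).

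What remains is a twofold verification. \emph{Soundness}: that $(W^{\Gamma}, M^{\Gamma}) \models \Gamma$. For nominal atoms and hybrid equations this is immediate from Propositions~\ref{proposition:HDCLSn-initiality} and~\ref{proposition:Gamma-congruence}; for rigid and flexible hybrid relations it holds by definition of the interpretations above. \emph{Universal property}: for any $(W, M) \models \Gamma$ I would obtain a unique $h \colon (W^{\Gamma}, M^{\Gamma}) \to (W, M)$ in two steps — first the unique Kripke homomorphism $(W^{\Gamma_{\nominal}}, T^{W^{\Gamma_{\nominal}}}) \to (W, M)$ given by Lemma~\ref{lemma:htm-freeness} and initiality of $W^{\Gamma_{\nominal}}$, then factoring through the quotient via Proposition~\ref{proposition:quotient-model} once we check that $\equiv$ is contained in the kernel of this homomorphism (which follows because $\Gamma \models t_{1} =_{k} t_{2}$ forces $(W, M)_{t_{1}} = (W, M)_{t_{2}}$).

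The main obstacle I anticipate is showing that the induced $h$ is actually a Kripke homomorphism into $(W, M)$ with respect to the relational symbols, i.e.\ that the interpretations of $\varpi$ and $\pi$ in $M^{\Gamma}$ are well-defined on congruence classes and are preserved by $h$. Well-definedness reduces to: if $[t] = [t']$ in $M^{\Gamma}_{[k]}$ then $\Gamma \models \varpi(t) \iff \Gamma \models \varpi(t')$ (analogously for $\pi$), which one unpacks using the description of $\equiv$ from Proposition~\ref{proposition:Gamma-congruence} together with congruence reasoning about $\models$. Preservation by $h$ is then the observation that $[t] \in M^{\Gamma}_{[k],\,\varpi}$ means $\Gamma \models \varpi(t)$, and since $(W, M) \models \Gamma$, also $(W, M)_{t} \in M_{W_{k},\,\varpi}$, i.e.\ $h_{[k]}([t]) \in M_{h([k]),\,\varpi}$; the flexible case is analogous. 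Uniqueness of $h$ follows from the uniqueness clauses in Lemma~\ref{lemma:htm-freeness} and Proposition~\ref{proposition:quotient-model}.
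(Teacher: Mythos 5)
Your proposal follows essentially the same route as the paper's proof: build the reachable initial nominal level via Proposition~\ref{proposition:HDCLSn-initiality}, generate a congruence on the resulting hybrid-term model via Proposition~\ref{proposition:Gamma-congruence}, quotient via Proposition~\ref{proposition:quotient-model}, interpret the remaining relation symbols minimally by entailment from \(\Gamma\), and obtain the unique homomorphism by factoring the freely induced one through the quotient. One small correction: Proposition~\ref{proposition:Gamma-congruence} requires the chosen set of equations to entail \(\Gamma_{W^{\Gamma_{\nominal}}}\), so you cannot instantiate it with the hybrid equations \(\Gamma_{eq}\) alone --- you must include the nominal equations of \(\Gamma\) as well (or, as the paper does, define \(\equiv_{w}\) directly via entailment from the full \(\Gamma\)); otherwise the hypothesis \(\Gamma_{eq} \models \Gamma_{W}\) fails whenever \(\Gamma\) identifies nominals without any hybrid equation forcing it.
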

\begin{proof}
  Let \(\Gamma_{\nominal}\) be the subset of nominal equations and relations in \(\Gamma\).
  By Proposition~\ref{proposition:HDCLSn-initiality}, there exists a reachable initial model \((W^{\Gamma_{\nominal}}, M^{\Gamma_{\nominal}})\) of \(\Gamma_{\nominal}\).
  For notational convenience, we denote the interpretation of a (nominal or hybrid) term \(t\) in \((W^{\Gamma_{\nominal}}, M^{\Gamma_{\nominal}})\) by \([t]\).

  For each \(w \in |W^{\Gamma_{\nominal}}|\), we define a relation \(\equiv_{w}\) on \(M^{\Gamma_{\nominal}}_{w}\) as follows:
  \(\tau_{1} \equiv_{w} \tau_{2}\) iff there exist a nominal \(k\) and hybrid terms \(t_{1}, t_{2} \in T^{\Delta}_{k}\) such that \(w = [k]\), \(\tau_{i} = [t_{i}]\), and \(\Gamma \models t_{1} =_{k} t_{2}\).
  Since \(W^{\Gamma_\nominal}\) is reachable, by Proposition~\ref{proposition:Gamma-congruence}, \(\equiv\) is a \(\Delta\)-congruence on \((W^{\Gamma_{\nominal}}, M^{\Gamma_{\nominal}})\).

  By Proposition~\ref{proposition:quotient-model}, the congruence \(\equiv\) determines a quotient-model homomorphism \((\_ /{\equiv})\) between  \((W^{\Gamma_{\nominal}}, M^{\Gamma_{\nominal}})\) and \((W^{\Gamma_{\nominal}}, M^{\Gamma_{\nominal}}/{\equiv})\).
  Let \((W^{\Gamma}, M^{\Gamma})\) be the \(\Delta\)-model obtained from \((W^{\Gamma_{\nominal}}, M^{\Gamma_{\nominal}}/{\equiv})\) by interpreting, for every \(w \in |W^{\Gamma_{\nominal}}|\), each relation symbol
  \begin{itemize}
  \item \(\pi \in P^{\rigid}\) as \(M^{\Gamma}_{w, \pi} = \{[t]/{\equiv_{w}} \in M^{\Gamma_{\nominal}}_{w}/{\equiv_{w}} \mid t \in T^{\Delta}_{k}, [k] = w,\ \text{and}\ \Gamma \models \pi(t)\}\), and
  \item \(\pi \in P^{\flexible}\) as \(M^{\Gamma}_{w, \pi} = \{[t]/{\equiv_{w}} \in M^{\Gamma_{\nominal}}_{w}/{\equiv_{w}} \mid t \in T^{\Delta}_{k}, [k] = w,\ \text{and}\ \Gamma \models \pi(k, t)\}\).
  \end{itemize}
  Trivially, since \((W^{\Gamma_{\nominal}}, M^{\Gamma_{\nominal}})\) is reachable and \((\_ /{\equiv})\) is surjective, then \((W^{\Gamma}, M^{\Gamma})\) is reachable as well.
  Therefore, all we need to prove is that \((W^{\Gamma}, M^{\Gamma})\) is an initial model of \(\Gamma\).

  The fact that \((W^{\Gamma}, M^{\Gamma})\) is a model of \(\Gamma\) follows in a straightforward manner from the very construction of \((W^{\Gamma}, M^{\Gamma})\).
  We focus on the initiality property.
  Let \((W, M)\) be a \(\Delta\)-model satisfying \(\Gamma\).
  Since \(\Gamma_{\nominal} \subseteq \Gamma\), we have \((W, M) \models \Gamma_{\nominal}\), and thus, by Proposition~\ref{proposition:HDCLSn-initiality}, there exists a unique \(h_{\nominal} \colon (W^{\Gamma_{\nominal}}, M^{\Gamma_{\nominal}}) \to (W, M)\).
  Moreover, for every \(\tau_{1} \equiv_{w} \tau_{2}\) we have:
  \begin{proofsteps}{22em}
    \(\Gamma \models t_{1} =_{k} t_{2}\), \(w = [k]\), and \(\tau_{i} = [t_{i}]\) for \(i \in \{1, 2\}\),
    \newline for some nominal term \(k\) and hybrid terms \(t_{1}, t_{2} \in T^{\Delta}_{k}\)
    & by the definition of \(\equiv_{w}\)
    \\
    \((W, M) \models t_{1} =_{k} t_{2}\)
    & since \((W, M) \models \Gamma\)
    \\
    \(h_{\nominal}(\tau_{1}) = h_{\nominal}(\tau_{2})\)
    & since \(h_{\nominal}(\tau_{i}) = h_{\nominal}([t_{i}]) = (W, M)_{t_{i}}\)
  \end{proofsteps}
  This shows that \({\equiv} \subseteq \ker(h_{\nominal})\).
  By Proposition~\ref{proposition:quotient-model}, there exists a unique Kripke homomorphism \(h \colon (W^{\Gamma_{\nominal}}, M^{\Gamma_{\nominal}}/{\equiv}) \to (W, M)\) such that \((\_/{\equiv}) \comp h = h_{\nominal}\).
  To finalize the proof, we need to show that \(h\) preserves the interpretation of the relation symbols in \(\Delta\).

  We consider only the case where \(\pi\) is a rigid relation symbol.
  Non-rigid relation symbols can be treated in a similar manner.
  Assume \(a \in M^{\Gamma}_{w, \pi}\) for some \(w \in |W^{\Gamma}|\).
  It follows that:
  \begin{proofsteps}{22em}
    \(\Gamma \models \pi(t)\), \(w = [k]\), and \(a = [t]\)
    \newline for some nominal term \(k\) and hybrid term \(t \in T^{\Delta}_{k}\)
    & by the definition of \(M^{\Gamma}_{w, \pi}\)
    \\
    \(M_{W_{k}, t} \in M_{W_{k}, \pi}\)
    & since \((W, M) \models \Gamma\)
    \\
    \(M_{h(w), t} \in M_{h(w), \pi}\)
    & since \(h(w) = h([k]) = W_{k}\)
    \\
    \(h_{w}(a) \in M_{h(w), \pi}\)
    & since \(h_{w}(a) = h_{w}([t]) = M_{h(w), t}\)
    \qedhere
  \end{proofsteps}
\end{proof}

\begin{corollary}
  \label{corollary:HDCLS0-initiality}
  Under the notations and hypotheses of Theorem~\ref{theorem:HDCLS0-initiality}, we have that\/
  \(\Gamma \models \rho\) if and only if \((W^{\Gamma}, M^{\Gamma}) \models \rho\),
  for all atomic sentences or action relations \(\rho\) over \(\Delta\).
\end{corollary}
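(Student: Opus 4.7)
The statement is essentially a soundness–completeness pairing for atomic sentences and action relations with respect to the initial model constructed in Theorem~\ref{theorem:HDCLS0-initiality}, and my plan is to derive both directions directly from the universal property of \((W^{\Gamma}, M^{\Gamma})\) together with Fact~\ref{fact:sat-atoms-actrel}.

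For the direct implication, I would simply observe that the proof of Theorem~\ref{theorem:HDCLS0-initiality} establishes \((W^{\Gamma}, M^{\Gamma}) \models \Gamma\). Consequently, whenever \(\Gamma \models \rho\), by the definition of semantic entailment we obtain \((W^{\Gamma}, M^{\Gamma}) \models \rho\). This half requires no further work.

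For the converse, I would take an arbitrary \(\Delta\)-model \((W, M)\) satisfying \(\Gamma\) and exhibit, by the initiality part of Theorem~\ref{theorem:HDCLS0-initiality}, a Kripke homomorphism \(h \colon (W^{\Gamma}, M^{\Gamma}) \to (W, M)\). Then, assuming \((W^{\Gamma}, M^{\Gamma}) \models \rho\) for an atomic sentence or action relation \(\rho\), I invoke Fact~\ref{fact:sat-atoms-actrel}(2), which states that satisfaction of such \(\rho\) is preserved along Kripke homomorphisms, to conclude \((W, M) \models \rho\). Since \((W, M)\) was an arbitrary model of \(\Gamma\), this yields \(\Gamma \models \rho\).

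There is no real obstacle here: the substantive content has already been absorbed into Theorem~\ref{theorem:HDCLS0-initiality} (existence and universal property of the initial model) and into Fact~\ref{fact:sat-atoms-actrel}(2) (preservation of atoms and action relations by homomorphisms). The only point worth being careful about is that Fact~\ref{fact:sat-atoms-actrel}(2) covers exactly the class of formulas \(\rho\) mentioned in the corollary, so the argument does not extend beyond atoms and action relations (which is why the statement is restricted to those). The write-up should therefore be a short two-step argument invoking these earlier results, with no induction on sentence structure needed.
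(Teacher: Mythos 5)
Your proposal is correct and follows exactly the paper's own argument: the `only if' direction from \((W^{\Gamma}, M^{\Gamma}) \models \Gamma\) and the definition of semantic entailment, and the `if' direction via the unique homomorphism into an arbitrary model of \(\Gamma\) together with Fact~\ref{fact:sat-atoms-actrel}(2). No differences worth noting.
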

\begin{proof}
  Since \((W^{\Gamma}, M^{\Gamma})\) is a model of \(\Gamma\), the `only if' part holds trivially by the definition of the semantic entailment.
  For the `if' part, let \(\rho\) be an atomic sentence or an action relation over \(\Delta\) such that \((W^{\Gamma}, M^{\Gamma}) \models \rho\), and \((W, M)\) an arbitrary \(\Delta\)-model that satisfies \(\Gamma\).
  Since \((W^{\Gamma}, M^{\Gamma})\) is an initial model of \(\Gamma\), there exists a unique homomorphism \((W^{\Gamma}, M^{\Gamma}) \to (W, M)\).
  And because the satisfaction of both atomic sentences and action relations is preserved by homomorphisms (by Fact~\ref{fact:sat-atoms-actrel}), it follows that \((W, M) \models \rho\).
  Therefore, \(\Gamma \models \rho\).
\end{proof}


\subsection{The upper level}
\label{subsection:upper-level}

In this subsection, we lift the initiality property from the basic level to the level of Horn clauses.
To that end, let us denote by \(\HDFOLS_{0}\) the atomic fragment of \(\HDFOLS\).

\begin{proposition}
  \label{proposition:init}
  Let\/ \(\Gamma\) be a set of Horn clauses over \(\Delta\) and \((W^{\Gamma_{0}}, M^{\Gamma_{0}})\) an initial model of the set\/
  \(\Gamma_{0} = \{\rho \in \Sen^{\HDFOLS_{0}}(\Delta) \mid \Gamma \models \rho\}\).
  For each nominal \(k\) and clause \(\gamma\) over \(\Delta\),
  \[
    \Gamma \models \at{k} \gamma
    \quad\text{implies}\quad
    (W^{\Gamma_{0}}, M^{\Gamma_{0}}) \models \at{k} \gamma.
  \]
\end{proposition}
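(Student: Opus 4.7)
The plan is to prove the statement by structural induction on the clause $\gamma$, with the nominal $k$ allowed to vary freely at each inductive step. Two properties of $(W^{\Gamma_{0}}, M^{\Gamma_{0}})$ are used throughout: it is reachable (Theorem~\ref{theorem:HDCLS0-initiality}), and by Corollary~\ref{corollary:HDCLS0-initiality} applied to $\Gamma_{0}$ it satisfies an atomic sentence or action relation $\rho$ iff $\Gamma_{0} \models \rho$, which implies $\Gamma \models \rho$ since every model of $\Gamma$ is a model of $\Gamma_{0}$. In the atomic base case, the world-independence part of Fact~\ref{fact:sat-atoms-actrel} reduces $\Gamma \models \at{k}\rho$ to $\Gamma \models \rho$, hence $\rho \in \Gamma_{0}$ and $(W^{\Gamma_{0}}, M^{\Gamma_{0}}) \models \at{k}\rho$. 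The \emph{retrieve} and \emph{next} cases are immediate since $\at{k}\at{k'}\gamma'$ is semantically equivalent to $\at{k'}\gamma'$, and $\at{k}\lnext{\sigma}\gamma'$ is equivalent to $\at{\sigma(k)}\gamma'$; both reduce directly to the inductive hypothesis applied to the simpler clause.

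For an implication $\bigand \Gamma' \implies \gamma'$, suppose $(W^{\Gamma_{0}}, M^{\Gamma_{0}}) \models^{w^{*}} \bigand \Gamma'$ at $w^{*} = W^{\Gamma_{0}}_{k}$. By Fact~\ref{fact:sat-atoms-actrel}, every $\rho' \in \Gamma'$ is globally satisfied by $(W^{\Gamma_{0}}, M^{\Gamma_{0}})$, so $\Gamma \models \rho'$ by the observation above; this strengthens the hypothesis $\Gamma \models \at{k}(\bigand \Gamma' \implies \gamma')$ into $\Gamma \models \at{k}\gamma'$, to which the inductive hypothesis applies. For the \emph{store} case $\store{z}\gamma'$, I pick the substitution $\theta \colon \{z\} \to \emptyset$ with $\theta(z) = k$. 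In any model $(W, M)$, the $z \assign W_{k}$-expansion coincides with $(W, M) \red_{\theta}$, so Proposition~\ref{proposition:subst-sat-cond} shows that $\at{k}\store{z}\gamma'$ and $\at{k}\theta(\gamma')$ have the same truth value; both the hypothesis and the goal translate to their $\theta(\gamma')$-version, where the inductive hypothesis applies.

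The main obstacle is the \emph{universal-quantifier} case, because $(W^{\Gamma_{0}}, M^{\Gamma_{0}})$ admits many $\Delta[X]$-expansions that are not a priori linked to models of $\Gamma$. Reachability is essential here: for every $\Delta[X]$-expansion $(W', M')$, Proposition~\ref{proposition:rm-gen-subst} supplies a substitution $\theta \colon X \to \emptyset$ with $(W^{\Gamma_{0}}, M^{\Gamma_{0}}) \red_{\theta} = (W', M')$, and Proposition~\ref{proposition:subst-sat-cond} rewrites the goal $(W', M') \models^{W^{\Gamma_{0}}_{k}} \gamma''$ as $(W^{\Gamma_{0}}, M^{\Gamma_{0}}) \models \at{k}\theta(\gamma'')$. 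On the hypothesis side, every $\Delta$-model $(W, M)$ of $\Gamma$ has $(W, M) \red_{\theta}$ as one of its $\Delta[X]$-expansions, so the assumption $\Gamma \models \at{k}\forAll{X}\gamma''$ specialises, again via Proposition~\ref{proposition:subst-sat-cond}, to $\Gamma \models \at{k}\theta(\gamma'')$; the inductive hypothesis then closes the case. The \emph{necessity} case $\nec{\act}\gamma'$ follows the same reachability pattern: any $w'$ with $(W^{\Gamma_{0}}_{k}, w') \in W^{\Gamma_{0}}_{\act}$ is $W^{\Gamma_{0}}_{k'}$ for some nominal $k'$, so $(W^{\Gamma_{0}}, M^{\Gamma_{0}}) \models \act(k, k')$, which through $\Gamma_{0}$ yields $\Gamma \models \act(k, k')$; combined with the hypothesis this gives $\Gamma \models \at{k'}\gamma'$, and the inductive hypothesis delivers satisfaction of $\gamma'$ at $w'$, completing the induction.
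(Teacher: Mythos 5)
Your proposal is correct and follows essentially the same route as the paper's proof: structural induction on the clause with the nominal varying, the same semantic equivalences for the \emph{retrieve}, \emph{store}, and \emph{next} cases, the same use of Fact~\ref{fact:sat-atoms-actrel} and Corollary~\ref{corollary:HDCLS0-initiality} to transfer the satisfaction of the hypotheses of an implication (and of action relations in the \emph{necessity} case) back to \(\Gamma\), and the same reachability-plus-substitution argument via Propositions~\ref{proposition:rm-gen-subst} and~\ref{proposition:subst-sat-cond} for the quantifier case. No gaps.
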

\begin{proof}
  We prove the result by induction on the structure of the clauses \(\gamma\) built over \(\Delta\).
  \begin{proofcases}

  \item[\(\gamma = \rho \in \Sen^{\HDFOLS_{0}}(\Delta)\)]
    If \(\Gamma \models \at{k} \rho\), then we obtain:
    \begin{proofsteps}{18em}
      \(\at{k} \rho \bimodels \rho\)
      & since \(\rho\) is atomic
      \\
      \(\Gamma \models \rho\)
      & since \(\Gamma \models \at{k} \rho \bimodels \rho\)
      \\
      \(\rho \in \Gamma_{0}\)
      & by the definition of \(\Gamma_{0}\)
      \\
      \((W^{\Gamma_{0}}, M^{\Gamma_{0}}) \models \rho\)
      & by Corollary~\ref{corollary:HDCLS0-initiality}
      \\
      \((W^{\Gamma_{0}}, M^{\Gamma_{0}}) \models \at{k} \rho\)
      & since \(\at{k} \rho \bimodels \rho\)
    \end{proofsteps}

  \item[\(\at{k'} \gamma\)]
    This case is trivial, because \(\at{k} \at{k'} \gamma \bimodels \at{k'} \gamma\).
    
  \item[\(\bigand H \implies \gamma\)]
    Suppose \(\Gamma \models \at{k} (\bigand H \implies \gamma)\) and \((W^{\Gamma_{0}}, M^{\Gamma_{0}}) \models^{w} H\), where \(w = W^{\Gamma_{0}}_{k}\).
    Then:
    \begin{proofsteps}{18em}
      \label{proposition:init-imp-pc1}%
      \((W^{\Gamma_{0}}, M^{\Gamma_{0}}) \models H\)
      & by Fact~\ref{fact:sat-atoms-actrel}
      \\
      \(\Gamma_{0} \models H\)
      & by Corollary~\ref{corollary:HDCLS0-initiality}
      \\
      \(\Gamma \models H\)
      & since \(\Gamma \models \Gamma_{0}\)
      \\
      \(\Gamma \models \at{k} \gamma\)
      & since \(H \cup \{\at{k} (\bigand H \implies \gamma)\} \models \at{k} \gamma\)
      \\
      \((W^{\Gamma_{0}}, M^{\Gamma_{0}}) \models \at{k} \gamma\)
      & by the induction hypothesis
      \\
      \label{proposition:init-imp-pc2}%
      \((W^{\Gamma_{0}}, M^{\Gamma_{0}}) \models^{w} \gamma\)
      & since \(w = W^{\Gamma_{0}}_{k}\)
      \\
      \((W^{\Gamma_{0}}, M^{\Gamma_{0}}) \models^{w} \bigand H \implies \gamma\)
      & from \ref{proposition:init-imp-pc1}--\ref{proposition:init-imp-pc2}, by the definition of \(\models\)
      \\
      \((W^{\Gamma_{0}}, M^{\Gamma_{0}}) \models \at{k} (\bigand H \implies \gamma)\)
      & by the definition of \(\models\)
    \end{proofsteps}

  \item[\(\store{z} \gamma\)]
    This case is straightforward, because \(\at{k} \store{z} \gamma \bimodels \at{k} \theta_{z \assign k}(\gamma)\), where \(\theta_{z \assign k} \colon \{z\} \to \emptyset\) is the \(\Delta\)-substitution that maps the nominal variable \(z\) to the (ground) term \(k\).

  \item[\(\forAll{X} \gamma\)]
    Suppose \(\Gamma \models \at{k} \forAll{X} \gamma\) and let \((W, M)\) be a \(\Delta[X]\)-expansion of \((W^{\Gamma_{0}}, M^{\Gamma_{0}})\).
    Since \((W^{\Gamma_{0}}, M^{\Gamma_{0}})\) is reachable, by Proposition~\ref{proposition:rm-gen-subst}, there exists a \(\Delta\)-substitution \(\theta \colon X \to \emptyset\) such that \((W^{\Gamma_{0}}, M^{\Gamma_{0}}) \red_{\theta} = (W, M)\).
    It follows that:
    \begin{proofsteps}{18em}
      \(\Gamma \models \at{k} \theta(\gamma)\)
      & since \(\Gamma \models \at{k} \forAll{X} \gamma\)
      \\
      \((W^{\Gamma_{0}}, M^{\Gamma_{0}}) \models \at{k} \theta(\gamma)\)
      & by the induction hypothesis
      \\
      \((W^{\Gamma_{0}}, M^{\Gamma_{0}}) \models^{w} \theta(\gamma)\), where \(w = W^{\Gamma_{0}}_{k}\)
      & by the definition of \(\models\)
      \\
      \label{proposition:init-quant-pc1}%
      \((W, M) \models^{w} \gamma\)
      & by the local satisfaction condition for \(\theta\)
      \\
      \((W^{\Gamma_{0}}, M^{\Gamma_{0}}) \models^{w} \forAll{X} \gamma\)
      & by \ref{proposition:init-quant-pc1}, since \((W, M)\) is an arbitrary \(\Delta[X]\)-expansion of \((W^{\Gamma_{0}}, M^{\Gamma_{0}})\)
      \\
      \((W^{\Gamma_{0}}, M^{\Gamma_{0}}) \models \at{k} \forAll{X} \gamma\)
      & by the definition of \(\models\)
    \end{proofsteps}

  \item[\(\nec{\act} \gamma\)]
    Suppose \(\Gamma \models \at{k} \nec{\act} \gamma\) and let \((w, w') \in W^{\Gamma_{0}}_{\act}\) such that \(w = W^{\Gamma_{0}}_{k}\).
    Since \((W^{\Gamma_{0}}, M^{\Gamma_{0}})\) is reachable, there exists a nominal term \(k'\) such that \(w' = W^{\Gamma_{0}}_{k'}\).
    It follows that:
    \begin{proofsteps}{18em}
      \label{proposition:init-nec-pc1}%
      \((W^{\Gamma_{0}}, M^{\Gamma_{0}}) \models \act(k, k')\)
      & since \((w, w') \in W^{\Gamma_{0}}_{\act}\)
      \\
      \(\Gamma_{0} \models \act(k, k')\)
      & by Corollary~\ref{corollary:HDCLS0-initiality}
      \\
      \(\Gamma \models \act(k, k')\)
      & since \(\Gamma \models \Gamma_{0}\)
      \\
      \(\Gamma \models \at{k'} \gamma\)
      & since \(\{\at{k} \nec{\act} \gamma, \act(k, k')\} \models \at{k'} \gamma\)
      \\
      \((W^{\Gamma_{0}}, M^{\Gamma_{0}}) \models \at{k'} \gamma\)
      & by the induction hypothesis
      \\
      \label{proposition:init-nec-pc2}%
      \((W^{\Gamma_{0}}, M^{\Gamma_{0}}) \models^{w'} \gamma\)
      & since \(w' = W^{\Gamma_{0}}_{k'}\)
      \\
      \((W^{\Gamma_{0}}, M^{\Gamma_{0}}) \models \at{k} \nec{\act} \gamma\)
      & from \ref{proposition:init-nec-pc1}--\ref{proposition:init-nec-pc2}, by the definition of \(\models\)
    \end{proofsteps}

  \item[\(\lnext{\sigma} \gamma\)]
    This case is trivial, because \(\at{k} \lnext{\sigma} \gamma \bimodels \at{\sigma(k)} \gamma\). 
    \qedhere
  \end{proofcases}
\end{proof}

Proposition~\ref{proposition:init} shows that the semantic consequences of a set \(\Gamma\) of Horn clauses are satisfied by the initial models of the set \(\Gamma_{0}\) of atomic sentences.
This is a very useful insight into the initiality property, which can be explored for other logical systems as well.

\begin{theorem} [Horn-clause initiality]
  \label{theorem:HDCLS-initiality}
  Let\/ \(\Gamma\) be a set of Horn clauses over a signature \(\Delta\) and \((W^{\Gamma_{0}}\!, M^{\Gamma_{0}})\) a reachable initial model of\/ \(\Gamma_{0}\).
  Then \((W^{\Gamma_{0}}\!, M^{\Gamma_{0}})\) is also an initial model of\/ \(\Gamma\).
\end{theorem}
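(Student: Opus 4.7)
The plan is to reduce everything to Proposition~\ref{proposition:init}, and to exploit the fact that \(\Gamma\) semantically entails every sentence in \(\Gamma_{0}\) (so every model of \(\Gamma\) is automatically a model of \(\Gamma_{0}\)). Concretely, I would split the argument into the two halves that initiality demands: first that \((W^{\Gamma_{0}}, M^{\Gamma_{0}}) \models \Gamma\), and then the universal property.

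For the first half, fix a clause \(\gamma \in \Gamma\) and a world \(w \in |W^{\Gamma_{0}}|\). Because \((W^{\Gamma_{0}}, M^{\Gamma_{0}})\) is reachable, there exists a nominal term \(k\) with \(W^{\Gamma_{0}}_{k} = w\). Trivially \(\Gamma \models \gamma\), and since satisfaction of \(\gamma\) must hold at every world of every model of \(\Gamma\), this gives \(\Gamma \models \at{k}\gamma\). Proposition~\ref{proposition:init} then yields \((W^{\Gamma_{0}}, M^{\Gamma_{0}}) \models \at{k}\gamma\), i.e.\ \((W^{\Gamma_{0}}, M^{\Gamma_{0}}) \models^{w} \gamma\). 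As \(w\) was arbitrary, \((W^{\Gamma_{0}}, M^{\Gamma_{0}}) \models \gamma\); and as \(\gamma\) was arbitrary, \((W^{\Gamma_{0}}, M^{\Gamma_{0}}) \models \Gamma\). This is where the \emph{retrieve} operator is essential, just as flagged at the end of Section~\ref{section:HDFOLS}: we need to be able to evaluate a clause at a specified world, and reachability is what lets us name every world by a nominal term.

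For the second half, let \((W, M)\) be any \(\Delta\)-model with \((W, M) \models \Gamma\). Every \(\rho \in \Gamma_{0}\) satisfies \(\Gamma \models \rho\) by the very definition of \(\Gamma_{0}\), so \((W, M) \models \Gamma_{0}\). Because \((W^{\Gamma_{0}}, M^{\Gamma_{0}})\) is initial for \(\Gamma_{0}\), there is a unique homomorphism \((W^{\Gamma_{0}}, M^{\Gamma_{0}}) \to (W, M)\), which is exactly the universal arrow required for initiality with respect to \(\Gamma\).

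I do not expect any real obstacle: Proposition~\ref{proposition:init} has already absorbed the structural induction on Horn-clause operators, and the closure of \(\Gamma_{0}\) under semantic consequence makes the universal property essentially free. The only point that deserves explicit mention is the use of reachability to convert \((W^{\Gamma_{0}}, M^{\Gamma_{0}}) \models \at{k}\gamma\) for every \(k\) into \((W^{\Gamma_{0}}, M^{\Gamma_{0}}) \models^{w}\gamma\) for every \(w\); without that, one would only conclude satisfaction at the worlds denoted by nominals rather than at all worlds.
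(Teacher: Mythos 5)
Your proposal is correct and follows essentially the same route as the paper's proof: the universal property comes for free from \(\Gamma \models \Gamma_{0}\) and the initiality of \((W^{\Gamma_{0}}, M^{\Gamma_{0}})\) for \(\Gamma_{0}\), while satisfaction of \(\Gamma\) is obtained by using reachability to name each world by a nominal term \(k\) and then invoking Proposition~\ref{proposition:init} on \(\at{k}\gamma\). Your explicit remark on why reachability is indispensable here matches the paper's argument exactly.
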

\begin{proof}
  Since \(\Gamma \models \Gamma_{0}\), it follows that for every model \((W, M)\) of \(\Gamma\) there exists a unique homomorphism \((W^{\Gamma_{0}}, M^{\Gamma_{0}}) \to (W, M)\).
  Therefore, what remains to prove is that \((W^{\Gamma_{0}}, M^{\Gamma_{0}})\) is a model of \(\Gamma\).
  Let \(w \in |W^{\Gamma_{0}}|\) and \(\gamma \in \Gamma\).
  Since \((W^{\Gamma_{0}}, M^{\Gamma_{0}})\) is reachable, there exists a nominal term \(k\) such that \(w = W^{\Gamma_{0}}_{k}\).
  Moreover, since \(\Gamma \models \gamma\), we have \(\Gamma \models \at{k} \gamma\).
  By Lemma~\ref{proposition:init}, we deduce that \((W^{\Gamma_{0}}, M^{\Gamma_{0}}) \models \at{k} \gamma\), which is equivalent to \((W^{\Gamma_{0}}, M^{\Gamma_{0}}) \models^{w} \gamma\).
\end{proof}


\section{Herbrand's theorem}
\label{section:Herbrand-theorem}

A query for a signature \(\Delta\) is a sentence of the form \(\Exists{X} \bigand E\) such that \(X\) is a set of variables and \(E\) is a finite set of atomic sentences or action relations.
The following result, which is fundamental for the development of logic programming for \(\HDFOLS\), states that establishing the (semantic) entailment of a query by a program -- understood here as a set of Horn clauses -- is equivalent to finding a (syntactic) \emph{correct-answer substitution} for that query.

\begin{theorem} [Herbrand's theorem]
  \label{theorem:Herbrand}
  Consider a \(\HDFOLS\)-signature \(\Delta\), a set\/ \(\Gamma\) of Horn clauses, and a query \(\Exists{X} \bigand E\) over \(\Delta\).
  The following statements are equivalent:
  \begin{enumerate}
  \item \label{LP1} \(\Gamma \models \Exists{X} \bigand E\).

  \item \label{LP2} \((W^{\Gamma}, M^{\Gamma}) \models \Exists{X} \bigand E\), where \((W^{\Gamma}, M^{\Gamma})\) is an initial model of\/ \(\Gamma\).

  \item \label{LP3} \(\Gamma \models \theta(E)\) for some \(\Delta\)-substitution \(\theta \colon X \to \emptyset\).
  \end{enumerate}
\end{theorem}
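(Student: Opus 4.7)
The plan is to prove the three equivalences by the cycle $\ref{LP1} \Rightarrow \ref{LP2} \Rightarrow \ref{LP3} \Rightarrow \ref{LP1}$. The two outer implications reduce almost immediately to results already established (Horn-clause initiality and the local satisfaction condition for substitutions), while the transition $\ref{LP2} \Rightarrow \ref{LP3}$ carries the weight of the argument, as it is where a semantic witness living inside the initial model must be converted into a syntactic substitution.

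For $\ref{LP1} \Rightarrow \ref{LP2}$, I would simply invoke Theorem~\ref{theorem:HDCLS-initiality} to recognise that $(W^\Gamma, M^\Gamma)$ -- which coincides with the reachable initial model $(W^{\Gamma_0}, M^{\Gamma_0})$ of the atomic consequences of $\Gamma$ -- is itself a model of $\Gamma$, and therefore satisfies every semantic consequence of $\Gamma$, including the query. For $\ref{LP3} \Rightarrow \ref{LP1}$, I would fix an arbitrary model $(W, M)$ of $\Gamma$, deduce $(W, M) \models \theta(\rho)$ for each $\rho \in E$, and apply the local satisfaction condition for substitutions (Proposition~\ref{proposition:subst-sat-cond}) to rewrite this as $(W, M) \red_\theta \models^w \rho$ at an arbitrarily chosen world $w$. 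Since $(W, M) \red_\theta$ is a $\Delta[X]$-expansion of $(W, M)$, this expansion witnesses $\Exists{X} \bigand E$ at $w$.

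For the main step $\ref{LP2} \Rightarrow \ref{LP3}$, I would start from $(W^\Gamma, M^\Gamma) \models^w \Exists{X} \bigand E$ and unfold the existential to obtain a $\Delta[X]$-expansion $(W', M')$ of $(W^\Gamma, M^\Gamma)$ satisfying every atom or action relation in $E$ at $w$; by Fact~\ref{fact:sat-atoms-actrel}, these satisfactions are in fact uniform across worlds. Because $(W^\Gamma, M^\Gamma)$ is reachable (Theorem~\ref{theorem:HDCLS0-initiality} combined with Theorem~\ref{theorem:HDCLS-initiality}), Proposition~\ref{proposition:rm-gen-subst} provides a $\Delta$-substitution $\theta \colon X \to \emptyset$ with $(W^\Gamma, M^\Gamma) \red_\theta = (W', M')$. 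Applying Proposition~\ref{proposition:subst-sat-cond} in the opposite direction yields $(W^\Gamma, M^\Gamma) \models \theta(\rho)$ for every $\rho \in E$; since each $\theta(\rho)$ is atomic or an action relation, Corollary~\ref{corollary:HDCLS0-initiality} then lifts this to $\Gamma_0 \models \theta(\rho)$, from which $\Gamma \models \theta(E)$ follows because $\Gamma \models \Gamma_0$.

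The main obstacle I anticipate is the careful bookkeeping in this middle implication: one has to identify the initial model of $\Gamma$ with the initial model of its atomic consequences, invoke reachability in precisely the form demanded by Proposition~\ref{proposition:rm-gen-subst}, and use the world-independence of atoms and action relations so that the single substitution extracted from one witnessing expansion serves uniformly. Once these ingredients are aligned, the proof becomes a clean recombination of the local satisfaction condition for substitutions with Horn-clause initiality.
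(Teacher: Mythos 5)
Your proof is correct and follows essentially the same route as the paper's: the same cycle of implications, with the key step \ref{LP2}\(\Rightarrow\)\ref{LP3} driven by reachability of the initial model, Proposition~\ref{proposition:rm-gen-subst}, and the local satisfaction condition for substitutions. The only cosmetic difference is that you finish that step by passing through \(\Gamma_{0}\) and Corollary~\ref{corollary:HDCLS0-initiality}, whereas the paper invokes the universal property of \((W^{\Gamma}, M^{\Gamma})\) together with Fact~\ref{fact:sat-atoms-actrel} directly---but this is the very argument by which the corollary is proved, so the two are interchangeable.
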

\begin{proof}
  Note that, by Theorem~\ref{theorem:HDCLS-initiality}, \(\Gamma\) has indeed a \emph{reachable} initial reachable model \((W^{\Gamma}, M^{\Gamma})\).
  \begin{proofcases}
  \item[\ref{LP1}~\(\implies\)~\ref{LP2}] By the fact that \((W^{\Gamma}, M^{\Gamma}) \models \Gamma\).

  \item[\ref{LP2}~\(\implies\)~\ref{LP3}]
    Let \(w\) be an arbitrary possible world of \((W^{\Gamma}, M^{\Gamma})\).
    It follows that:
    \begin{proofsteps}{21em}
      \((W, M) \models^{w} E\)
      \newline for some \(\Delta[X]\)-expansion \((W, M)\) of \((W^{\Gamma}, M^{\Gamma})\)
      & by the definition of \(\models\)
      \\
      \((W, M) = (W^{\Gamma}, M^{\Gamma}) \red_{\theta}\)
      \newline for some \(\Delta\)-substitution \(\theta \colon X \to \emptyset\)
      & since \((W^{\Gamma}, M^{\Gamma})\) is reachable
      \\
      \((W^{\Gamma}, M^{\Gamma}) \models \theta(E)\)
      & by the local sat.\ cond.\ for \(\theta\)
      \\
      \((W', M') \models \theta(E)\)
      \newline
      for all models \((W', M')\) such that \((W', M') \models \Gamma\)
      & by Fact~\ref{fact:sat-atoms-actrel}, based on the universal property of \((W^{\Gamma}, M^{\Gamma})\)
      \\
      \(\Gamma \models \theta(E)\)
      & by the definition of \(\models\)
    \end{proofsteps}

  \item[\ref{LP3}~\(\implies\)~\ref{LP1}]
    Let \((W, M)\) be a model of \(\Gamma\) and \(w \in |W|\).
    It follows that:
    \begin{proofsteps}{21em}
      \((W, M) \models^{w} \theta(E)\)
      & since \(\Gamma \models \theta(E)\)
      \\
      \((W, M) \red_{\theta} \models^{w} E\)
      & by the local sat.\ cond.\ for \(\theta\)
      \\
      \((W, M) \models^{w} \Exists{X} \bigand E\)
      & since \((W, M) \red_{\theta} \red_{\Delta} = (W, M)\)
      \\
      \(\Gamma \models \Exists{X} \bigand E\)
      & by the definition of \(\models\)
      \qedhere
    \end{proofsteps}
  \end{proofcases}
\end{proof}
\vspace{-\smallskipamount}


\section{Related work}
\label{section:related-work}

This work builds on previous results presented by the first author in~\cite{Gaina17Her}, and is related to developments in the initial semantics for hybridized institutions reported in~\cite{Diaconescu16} and in the foundations of logic-independent logic programming from~\cite{Diaconescu04,TutuF15,TutuF17}.
Clearly, those related studies are all logic-independent, whereas here we made a deliberate choice to focus on a concrete logical system -- still, many of the results developed in this paper can be lifted to an institution-independent setting.
The hybrid-dynamic logic that we propose has a number of distinguishing features that take it outside the scope of the frameworks presented in~\cite{Gaina17Her,Diaconescu16,Diaconescu04}.
More specifically, since possible worlds are no longer plain, but have an algebraic structure, one cannot employ the strategy used in~\cite{Gaina17Her} (or~\cite{Diaconescu16}, for that matter) for proving Theorem~\ref{theorem:HDCLS0-initiality}.
In fact, the nominal framework discussed therein can be regarded as an instance of the one we consider here for \(\HDFOLS\), where all nominals are constants.
That restriction leads to a much simpler construction of the quotient models (and proof of the existence of initial models), where the quotienting on possible worlds can be done by means of signature morphisms whose action on nominal constants is surjective.

Compared to~\cite{Diaconescu16}, the initiality results presented in this paper are based on constructive arguments that avoid the heavier model-theoretic infrastructure of quasi-varieties and inclusion systems.
Moreover, those arguments hold as well for some classes of constrained models (e.g.\ models with reachable possible worlds) that do not form a quasi-variety.

Besides these technical advancements, it is also worth noting that the Horn clauses and queries examined in this paper exceed the expressive power of the clauses and queries considered in \cite{Gaina17Her}.
For instance, the logical implications used in clauses are no longer restricted to atoms; instead, the hypothesis of an implication can involve both atomic sentences and action relations (which in general do not have an initial model), and the conclusion can be any Horn clause.
Moreover, queries can involve action relations too, which in general are not basic sentences.
For that reason, the variant of Herbrand's theorem presented here does not fit into the framework advanced in~\cite{Diaconescu04}.
A more general approach to Herbrand's theorem can be found in~\cite{TutuF17}, but the result presented there relies on a much more complex institution-theoretic infrastructure, which is outside the scope of this paper.


\section{Conclusions and further work}
\label{section:conclusions}

In this work, we have presented a new hybrid-dynamic logic for the specification and analysis of reconfigurable systems.
From a semantic perspective, this allows us to capture reconfigurable systems as Kripke structures whose possible worlds%
\begin{inparenum}
  \inparitem have an algebraic structure, which provides support for operations on system configurations, and
  \inparitem are labelled with constrained first-order models that capture the local/inner structure and behaviour of configurations.
\end{inparenum}
The model constraints that we consider here are given by rigid declarations of first-order symbols (sorts, or operation or relation symbols), which ensure a uniform interpretation of those symbols across the possible worlds.
This kind of modelling is also reflected in the syntax of the logic, where we use nominal and hybrid terms to refer to possible worlds and to the elements of the first-order structures associated to those worlds.
Terms are used to form nominal and hybrid equations, as well as relational atoms, from which we build complex sentences using Boolean connectives, quantifiers, and operators that are specific to hybrid logics and to dynamic logics.
To the best or our knowledge, there is currently no other logical system that supports such an approach to the reconfiguration paradigm.

We have shown that every set of atomic hybrid-dynamic sentences admits a reachable initial model (Kripke structure), and then generalized this result to Horn clauses, which in this case have a highly complex structure, involving implications and universal quantifiers (as in first-order logic), but also hybrid-logic operators such as \emph{retrieve} and \emph{store}, and dynamic logic operators such as \emph{necessity} over actions.
This is a fundamental result of great importance, because it allows us to deal with hybrid-dynamic logic programs (formalized as sets of Horn clauses).
To that end, we have studied a variant of Herbrand's theorem -- which is central to the logic-programming paradigm -- for hybrid-dynamic first-order logic.

This work is part of a much broader research agenda on the rigorous formal development of reconfigurable systems.
There are several clear tasks that we aim to pursue further.
One of those is to generalize the results obtained here to an institution-independent setting so that we could apply them with ease to other logical systems, such as those of order-sorted algebra and of preordered algebra.
The most convenient way to achieve that is by taking into account the relationship between the basic level and the upper level discussed in Section~\ref{section:initiality}; the former is necessarily logic-dependent, whereas the latter can be developed for arbitrary logics.
A second important task is to study Birkhoff completeness and resolution-based inference rules for hybrid-dynamic logics, for which we aim to follow~\cite{Gaina17Bir,TutuF17}.
This will provide a minimal framework for the analysis of reconfigurable systems, and a foundation for developing appropriate tool support and a number of concrete case studies.


\bibliography{HDFOL}


\end{document}